\def\x{\bm{x}}
\def\y{\bm{y}}
\def\n{\bm{n}}
\def\A{\bm{A}}
\def\R{\bm{R}}
\def\v{\bm{v}}
\def\v{\bm{v}}
\def\W{\bm{W}}
\def\Z{\bm{Z}}
\def\N{\bm{N}}
\def\I{\bm{I}}
\def\V{\bm{V}}
\def\U{\bm{U}}
\def\X{\bm{X}}
\def\S{\bm{S}}
\newcommand{\E}{\mathrm{E}}
\newtheorem{theorem}{Theorem}
\newtheorem{lemma}{Lemma}
\newtheorem{assumption}{Assumption}
\newtheorem{definition}{Definition}
\newtheorem{remark}{Remark}
\newcolumntype{"}{@{\hskip\tabcolsep\vrule width 1pt\hskip\tabcolsep}}
\newcommand{\mr}{\mathrm}
\newcommand{\BE}{\begin{equation}}
\newcommand{\EE}{\end{equation}}
\newcommand{\BS}{\begin{subequations}}
\newcommand{\ES}{\end{subequations}}
\begin{document}

\title{TARM: A Turbo-type Algorithm for Affine Rank Minimization}

\author{Zhipeng~Xue,~Xiaojun~Yuan,~\IEEEmembership{Senior~Member,~IEEE}, Junjie~Ma,~and~Yi~Ma,~\IEEEmembership{Fellow,~IEEE}}

\maketitle

\begin{abstract}
The affine rank minimization (ARM) problem arises in many real-world applications. The goal is to recover a low-rank matrix from a small amount of noisy affine measurements. The original problem is NP-hard, and so directly solving the problem is computationally prohibitive. Approximate low-complexity solutions for ARM have recently attracted much research interest. In this paper, we design an iterative algorithm for ARM based on message passing principles. The proposed algorithm is termed turbo-type ARM (TARM), as inspired by the recently developed turbo compressed sensing algorithm for sparse signal recovery. We show that, when the linear operator for measurement is right-orthogonally invariant (ROIL), a scalar function called state evolution can be established to accurately predict the behaviour of the TARM algorithm. We also show that TARM converges much faster than the counterpart algorithms for low-rank matrix recovery. We further extend the TARM algorithm for matrix completion, where the measurement operator corresponds to a random selection matrix. We show that, although the state evolution is not accurate for matrix completion, the TARM algorithm with carefully tuned parameters still significantly outperforms its counterparts.
\end{abstract}

% Note that keywords are not normally used for peerreview papers.
\begin{IEEEkeywords}
Low-rank matrix recovery, matrix completion, affine rank minimization, state evolution, low-rank matrix denoising.
\end{IEEEkeywords}

\IEEEpeerreviewmaketitle

\section{Introduction}
Low-rank matrices have extensive applications in real-world applications including remote sensing, recommendation systems, global positioning, and system identification. In these applications, a fundamental problem is to recover an unknown matrix from a small number of observations by exploiting its low-rank property \cite{candes2010matrix, davenport2016overview}. In specific, we consider a rank-$r$ matrix $\X^{\ast}\in \mathbb{R}^{n_1\times n_2}$ with the integers $r, n_1,$ and $n_2$ satisfying $r\ll n_1$ and $r\ll n_2$. We aim to recover $\X^{\ast}$ from an affine measurement given by
\begin{align}
		\y=\mathcal{A}(\X^{\ast})\in \mathbb{R}^{m}
\end{align}
where $\mathcal{A}: \mathbb{R}^{n_1\times n_2}\rightarrow \mathbb{R}^m$ is a linear map with $m<n_1 n_2=n$. When $\mathcal{A}$ is a general linear operator such as Gaussian operators and partial orthogonal operators, we refer to the problem as {\it low-rank matrix recovery}; when $\mathcal{A}$ is a selector that outputs a subset of the entries of $\X^{\ast}$, we refer to the problem as {\it matrix completion}. 

The problem can be formally cast as affine rank minimization (ARM):
\begin{align}
	\begin{split}
		\min_{\X}& \text{ rank}(\X)\\
		\text{s.t. }& \y=\mathcal{A}(\X)\label{arm}.
	\end{split}
\end{align}
Problem (\ref{arm}) is NP-hard, and so solving (\ref{arm}) is computationally prohibitive. To reduce complexity, a popular alternative to (\ref{arm}) is the following nuclear norm minimization (NNM) problem:
\begin{align}
	\begin{split}
		\min_{\X}&\  \|\X\|_{\ast}\\
		\text{s.t. }& \y=\mathcal{A}(\X).\label{nuclear_p}
	\end{split}
\end{align}
In \cite{recht2010guaranteed}, Recht et al. proved that when the restricted isometry property (RIP) holds for the linear operator $\mathcal{A}$, the ARM problem in (\ref{arm}) is equivalent to the NNM problem in (\ref{nuclear_p}). The NNM problem can be solved by semidefinite programing  (SDP). Existing convex solvers, such as the interior point method \cite{liu2009interior}, can be employed to find a solution in polynomial time. However, SDP is still computationally involving, especially when applied to large-scale problems with high dimensional data. To address this issue, low-cost iterative methods, such as the singular value thresholding (SVT) method \cite{cai2010singular} and the proximal gradient algorithm \cite{toh2010accelerated}, have been proposed to further reduce complexity at the cost of a certain amount of performance degradation.

In real-world applications, perfect measurements are rare, and noise is naturally introduced in the measurement process. That is, we want to recover $\X^{\ast}$ from a noisy measurement of
\begin{align}
		\y=\mathcal{A}(\X^{\ast})+\n\label{Gaussian_mea}
\end{align}
where $\n \in \mathbb{R}^m$ is a Gaussian noise with zero-mean and covariance $\sigma^2\bm{I}$ and is independent of $\mathcal{A}(\X^{\ast})$. To recover the low-rank matrix $\X^{\ast}$ from (\ref{Gaussian_mea}), we turn to the following stable formulation of the ARM problem:
\begin{align}
	\begin{split}
		\min_{\X}& \|\y-\mathcal{A}(\X)\|_2^2\\
		\text{s.t. }& \text{ rank}(\X)\leq r.
	\end{split}\label{rank_constrain}
\end{align}
The problem in (\ref{rank_constrain}) is still NP-hard and difficult to solve. Several suboptimal algorithms have been proposed to yield approximate solutions to (\ref{rank_constrain}). For example, the author in \cite{jain2013low} proposed an alternation minimization method to factorize rank-$r$ matrix $\X^{\ast}$ as the product of two matrices with dimension $n_1\times r$ and $r\times n_2$ respectively. This method is more efficient in storage than SDP and SVT methods, especially when large-dimension low-rank matrices are involved. A second approach borrows the idea of iterative hard thresholding (IHT) for compressed sensing. For example, the singular value projection (SVP) algorithm \cite{jain2010guaranteed} for stable ARM can be viewed as a counterpart of the IHT algorithm \cite{blumensath2009iterative} for compressed sensing. SVP solves the stable ARM problem by combining the projected gradient method with singular value decomposition (SVD). Improved version of SVP, termed normalized IHT (NIHT) \cite{tanner2013normalized}, adaptively selects the step size of the gradient descent step of SVP, rather than uses a fixed step size. These algorithms involve a projection step which projects a matrix into a low-rank space using truncated SVD. In \cite{wei2016guarantees}, a Riemannian method, termed RGrad, was proposed to extend NIHT by projecting the search direction of gradient descent into a low dimensional space. Compared with the alternation minimization method, these IHT-based algorithms exhibit better convergence performance with lower computational complexity. Furthermore, the convergence of these IHT-based algorithms are guaranteed when a certain restricted isometry property (RIP) holds \cite{jain2010guaranteed, tanner2013normalized,wei2016guarantees}.

In this paper, we aim to design low-complexity iterative algorithms to solve the stable ARM problem based on message-passing principles \cite{ma2014turbo}, a perspective different from the existing approaches mentioned above. In specific, we propose a novel turbo-type algorithm, termed turbo-type affine rank minimization (TARM), for solving the stable ARM problem, as inspired by the turbo compressed-sensing (Turbo-CS) algorithm for sparse signal recovery \cite{ma2014turbo, xue2017denoising}. Interestingly, although TARM is designed based on the idea of message passing, the resultant algorithm bears a similar structure to the gradient-based algorithms such as SVP and NIHT. A key difference of TARM from SVP and NIHT resides in an extra step in TARM for the calculation of the so-called extrinsic messages. With this extra step, TARM is able to find a better descendent direction for each iteration, so as to achieve a much higher convergence rate than SVP and NIHT. For low-rank matrix recovery, we establish a state evolution technique to characterize the behaviour of the TARM algorithm when the linear operator $\mathcal{A}$ is right-orthogonally invariant (ROIL). We show that the state evolution accurately predicts the performnace of the TARM algorithm. We also show that TARM converges much faster than other existing algorithms including SVP, NIHT, and RGrad. We further extend the TARM algorithm for matrix completion. We show that, although the state evolution cannot accurately predict the performance any more, the TARM algorithm with carefully tuned parameters still considerably outperforms its counterparts with comparable computational complexity.

\section{The TARM Algorithm}\label{Turbo_RARM}
\subsection{Algorithm Description}
In this section, we describe our proposed algorithm for affine rank minimization. The algorithm is inspired by the idea of turbo compressed sensing \cite{xue2017denoising, ma2014turbo}, hence the name turbo-type affine rank minimization (TARM).

The diagram of TARM is illustrated in Fig. \ref{tarm_fig}. There are two concatenated modules in TARM, namely, Module A and Module B. Module A estimates the low-rank matrix $\X^{\ast}$ via a linear estimator $\mathcal{E}(\cdot)$ based on the observation $\y$ and the input $\X$. Then the function $\mathcal{E}^{ext}(\cdot)$, which linearly combines $\mathcal{E}(\X)$ and $\X$, is employed to decorrelate the input and output estimation errors of Module A. The superscript ``ext" stands for extrinsic, since the output of $\mathcal{E}^{ext}(\cdot)$ is referred to as an extrinsic message. Module B has a similar structure as Module A does. In Module B, the output $\R$ of Module A is passed to a denoiser $\mathcal{D}(\cdot)$ which suppresses the estimation error by exploiting the low-rank structure of $\X^{\ast}$. The denoised output $\Z$ is then passed to a function $\mathcal{D}^{ext}(\cdot)$ which linearly combines $\Z$ and $\R$ for the decorrelation of input output errors. The two modules are executed iteratively to refine the estimates. Note that the TARM diagram in Fig. \ref{tarm_fig} is a matrix analogy of the Turbo-CS algorithm in Fig. 2 of \cite{xue2017denoising}.

The details of TARM are presented in Algorithm \ref{algorithm1}. We use index $t$ to denote the $t$-th iteration. There are three main steps at each iteration of TARM. The first step (Line 3 of Algorithm \ref{algorithm1}) is a linear estimation step which corresponds to Module A in Fig. \ref{tarm_fig}. This step combines $\mathcal{E}(\cdot)$ and $\mathcal{E}^{ext}(\cdot)$ since both are linear functions. Interestingly, this step can be interpreted as a gradient descent step for problem (5), where the parameter $\mu_t$ is the step size at the $t$-th iteration. In this sense, the TARM algorithm described here is closely related to the gradient-descent based SVP algorithm in \cite{jain2010guaranteed} and the NIHT algorithm in \cite{tanner2013normalized}. The second step (Line 4 of Algorithm \ref{algorithm1}) processes $\R^{(t)}$ with a denoiser $\mathcal{D}(\cdot)$. This step corresponds to the denoiser in Module B of TARM. There are various choices of $\mathcal{D}(\cdot)$ in the literature, such as the best rank-$r$ approximation \cite{eckart1936approximation} and the singular value thresholding (SVT) denoiser \cite{candes2009exact}. In this paper, we focus on the best rank-$r$ approximation defined by 
\begin{align}
	\mathcal{D}(\R)=\sum_{i=1}^r \sigma_{i}\bm{u}_i\bm{v}_i^T
\end{align}
where $\sigma_i, \bm{u}_i$, and $\bm{v}_i$ are respectively the $i$-th singular value and the corresponding left and right singular vectors of the input $\R$. We will show that, with this choice of $\mathcal{D}(\cdot)$, Module B allows an analytical characterization of its input output behavior. The third step (Line 5 of Algorithm \ref{algorithm1}) corresponds to function $\mathcal{D}^{ext}(\cdot)$ which is a linear combination of $\R^{(t)}$ and $\Z^{(t)}$ with the coefficients specified by $c_{t}$ and $\alpha_{t}$.

\begin{figure}[!ht]
	\centering
	\includegraphics[width=\linewidth]{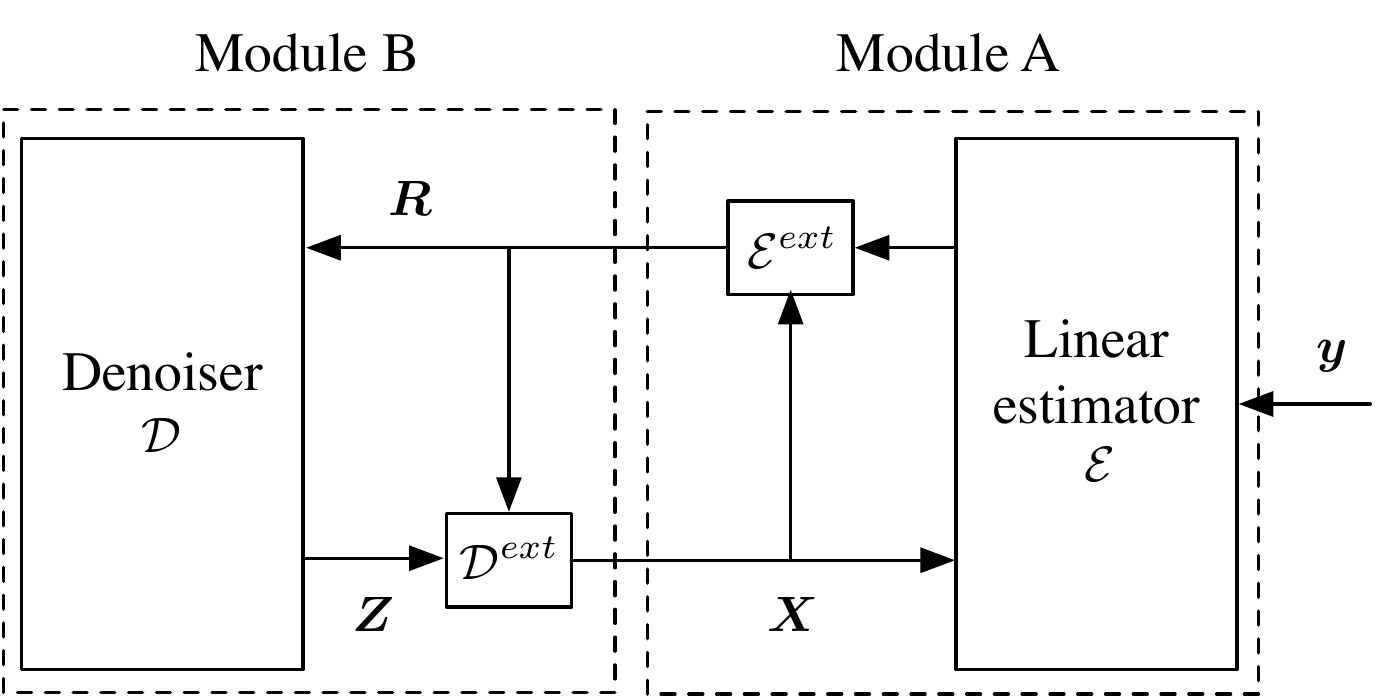}
	\caption{The diagram of the TARM algorithm.}\label{tarm_fig}
\end{figure}

\begin{algorithm}
\caption{TARM for affine rank minimization}\label{algorithm1}
\begin{algorithmic}[1]
\REQUIRE $\mathcal{A}, \y,\X^{(0)}=\bm{0}, t=0$
\\
\WHILE{the stopping criterion is not met}
\STATE $t=t+1$
\STATE $\R^{(t)}=\X^{(t-1)}+\mu_t\mathcal{A}^{T}(\y-\mathcal{A}(\X^{(t-1)}))$
\STATE $\Z^{(t)}=\mathcal{D}(\R^{(t)})$
\STATE $\X^{(t)}=c_{t}(\Z^{(t)}-\alpha_{t}\R^{(t)})$
\ENDWHILE
\ENSURE  $\Z^{(t)}$
\end{algorithmic}
\end{algorithm}

\subsection{Determining the Parameters of TARM}
In this section, we discuss how to determine the parameters $\{\mu_t\}$, $\{c_t\}$, and $\{\alpha_t\}$. We first note that when $c_t=1$ and $\alpha_{t}=0$ for any $t$, the algorithm reduces to the SVP or NIHT algorithm (depending on the choice of $\mu_t$). As such, the key difference of TARM from SVP and NITH resides in the choice of these parameters. By optimizing these parameters, the TARM algorithm aims to find a much steeper descendent direction for each iteration, so as to achieve a convergence rate much higher than SVP and NIHT. In specific, we follow the turbo principle \cite{xue2017denoising,ma2014turbo}, a special form of the more general message passing principle, to determine these parameters. That is, for each iteration $t$, $\{\mu_t, c_t, \alpha_t\}$ need to satisfy the following three conditions:
 \begin{itemize}
 \item Condition 1: 
\begin{align}
	\left<\R^{(t)}-\X^{\ast},\X^{(t-1)}-\X^{\ast}\right>=0,\label{cond1}
\end{align}
 \item Condition 2:
\begin{align}
	\left<\R^{(t)}-\X^{\ast},\X^{(t)}-\X^{\ast}\right>=0,\label{cond2}
\end{align}
\item Condition 3: For given $\X^{(t-1)}$,
\begin{align}
	\|\X^{(t)}-\X^{\ast}\|_F^2 \text{ is minimized under (\ref{cond1}) and (\ref{cond2}).}\label{cond3}
\end{align}
\end{itemize}
In the above, Condition 1 ensures that the input and output estimation errors of Module A are uncorrelated. Similarly, Condition 2 ensures that the input and output estimation errors of Module B are uncorrelated. Condition 3 ensures that the output estimation error of Module B is minimized over $\{\mu_t, c_t, \alpha_t\}$ for each iteration $t$. Note that in graphical-model based message passing, the out-going message on an edge is required to be independent of the incoming message on the edge. Since uncorrelatedness implies independence for Gaussian random variables, (\ref{cond1}) and (\ref{cond2}) can be seen as necessary conditions for Gaussian message passing. In this sense, the minimization in (\ref{cond3}) can be interpreted as finding the best estimate of $\X^{\ast}$ for each iteration under the Gaussian message passing framework.

We have the following lemma, with the proof given in Appendix A.

\begin{lemma}\label{lemma_p}
If Conditions 1-3 hold, then
\BS\label{parameters}
\begin{align}
	\mu_t &=\frac{\|\X^{(t-1)}-\X^{\ast}\|_F^2}{\left<\mathcal{A}(\X^{(t-1)}-\X^{\ast})-\n,\mathcal{A}(\X^{(t-1)}-\X^{\ast})\right>}\\
	\alpha_{t}&=\frac{-b_t\pm\sqrt{b_t^2-4a_t d_t}}{2a_t}\\
	c_{t}&=\frac{\left<\Z^{(t)}-\alpha_{t}\R^{(t)},\R^{(t)}\right>}{\|\Z^{(t)}-\alpha_{t}\R^{(t)}\|_F^2},\label{c_t}
\end{align}\ES
with 
\BS\label{abd}\begin{align}
	a_t &=  \|\R^{(t)}\|_F^2\|\R^{(t)}-\X^{\ast}\|_F^2\\
	b_t& = -\|\R^{(t)}\!\|_F^2\!\left<\R^{(t)}\!-\!\X^{\ast},\Z^{(t)}\right>\!-\!\|\Z^{(t)}\|_F^2\|\R^{(t)}\!-\!\X^{\ast}\|_F^2\notag\\&\ \ \  +\|\Z^{(t)}\!\|_F^2\!\left<\R^{(t)}\!-\X^{\ast},\X^{\ast}\right>\\
	d_t&=  \|\Z^{(t)}\|_F^2\left<\R^{(t)}-\X^{\ast},\Z^{(t)}-\X^{\ast}\right>.
\end{align}\ES
\end{lemma}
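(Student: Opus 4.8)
The plan is to exploit the fact that Condition~1 involves only $\mu_t$, whereas Conditions~2 and~3 jointly pin down $c_t$ and $\alpha_t$, so the three unknowns decouple into a scalar equation plus a coupled two-variable system. First I would dispatch $\mu_t$. Substituting Line~3 of Algorithm~\ref{algorithm1} and writing $\y-\mathcal{A}(\X^{(t-1)})=-\mathcal{A}(\X^{(t-1)}-\X^{\ast})+\n$ gives
\begin{align}
\R^{(t)}-\X^{\ast}=(\X^{(t-1)}-\X^{\ast})-\mu_t\,\mathcal{A}^{T}\!\left(\mathcal{A}(\X^{(t-1)}-\X^{\ast})-\n\right).
\end{align}
Taking the inner product with $\X^{(t-1)}-\X^{\ast}$, moving $\mathcal{A}^{T}$ onto the other factor via the adjoint identity, and imposing \eqref{cond1} leaves a single linear equation in $\mu_t$ whose solution is exactly the stated expression. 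This step is routine.

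The real work is Conditions~2 and~3. I would first record the one structural fact that makes the bookkeeping collapse: since $\Z^{(t)}=\mathcal{D}(\R^{(t)})$ is the best rank-$r$ approximation, the retained and discarded singular components are Frobenius-orthogonal, so that
\begin{align}
\langle\Z^{(t)},\R^{(t)}-\Z^{(t)}\rangle=0,\qquad\text{equivalently}\qquad\langle\Z^{(t)},\R^{(t)}\rangle=\|\Z^{(t)}\|_F^2.
\end{align}
Writing $\X^{(t)}=c_t(\Z^{(t)}-\alpha_t\R^{(t)})$ and abbreviating $\W=\Z^{(t)}-\alpha_t\R^{(t)}$, Condition~\eqref{cond2} expands into the bilinear relation $c_t\langle\R^{(t)}-\X^{\ast},\W\rangle=\langle\R^{(t)}-\X^{\ast},\X^{\ast}\rangle$, while the scale optimality in Condition~\eqref{cond3} supplies the second relation in the projection form $c_t\|\W\|_F^2=\langle\W,\R^{(t)}\rangle$, i.e.\ \eqref{c_t}. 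Eliminating $c_t$ between these two relations and substituting $\langle\Z^{(t)},\R^{(t)}\rangle=\|\Z^{(t)}\|_F^2$ wherever that cross term appears should annihilate every inner product in which $\X^{\ast}$ stands alone, leaving only $\|\R^{(t)}\|_F^2$, $\|\Z^{(t)}\|_F^2$, $\|\R^{(t)}-\X^{\ast}\|_F^2$, $\langle\R^{(t)}-\X^{\ast},\Z^{(t)}\rangle$, and $\langle\R^{(t)}-\X^{\ast},\X^{\ast}\rangle$, and collapsing the identity into the quadratic $a_t\alpha_t^2+b_t\alpha_t+d_t=0$ with the coefficients \eqref{abd}; its roots are the stated $\alpha_t$. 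Back-substitution into \eqref{c_t} then yields $c_t$.

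The main obstacle is the delicate point in establishing the second relation in the $\R^{(t)}$-projection form: the literal scale-minimizer of $\|\X^{(t)}-\X^{\ast}\|_F^2$ is $\langle\W,\X^{\ast}\rangle/\|\W\|_F^2$, and trading the inaccessible $\langle\W,\X^{\ast}\rangle$ for $\langle\W,\R^{(t)}\rangle$ (so that \eqref{c_t} holds) is exactly the place where Condition~\eqref{cond2} must be invoked, so I would need to verify carefully that Conditions~2 and~3 together — not either alone — force this substitution. The secondary obstacle is purely computational: one must expand \eqref{cond2} in full, track precisely which cross terms the denoiser identity removes, and confirm that the survivors regroup verbatim into $a_t$, $b_t$, $d_t$. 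Finally, the $\pm$ in $\alpha_t$ reflects the two roots of the quadratic, and the minimization in Condition~\eqref{cond3} is what selects the admissible one; the argument is complete only after checking which root gives the smaller $\|\X^{(t)}-\X^{\ast}\|_F^2$.
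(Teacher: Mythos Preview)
Your proposal is correct and takes essentially the same approach as the paper. The clean resolution of your ``main obstacle'' is the Pythagoras identity: under Condition~2 the cross term vanishes so that $\|\X^{(t)}-\R^{(t)}\|_F^2 = \|\X^{(t)}-\X^{\ast}\|_F^2 + \|\X^{\ast}-\R^{(t)}\|_F^2$, whence minimizing $\|\X^{(t)}-\X^{\ast}\|_F^2$ subject to Condition~2 is equivalent to minimizing $\|\X^{(t)}-\R^{(t)}\|_F^2$, and the projection form \eqref{c_t} drops out immediately; you are also right that the denoiser orthogonality $\langle\Z^{(t)},\R^{(t)}\rangle=\|\Z^{(t)}\|_F^2$ is what collapses the quadratic coefficients into the stated $a_t,b_t,d_t$, though the paper invokes it without flagging it.
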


\begin{remark}
In (\ref{parameters}b), $\alpha_t$ has two possible choices and only one of them minimizes the error in (\ref{cond3}). From the discussion below (\ref{apda}), minimizing the square error in (\ref{cond3}) is equivalent to minimizing $\|\X^{(t)}-\R^{(t)}\|_F^2$. We have
\BS\label{2alpha}
\begin{align}
&\left\|\X^{(t)}-\R^{(t)}\right\|_F^2\notag\\
=&\left\|c_t(\Z^{(t)}-\alpha_{t}\R^{(t)})-\R^{(t)}\right\|_F^2\\
	=&\left\|\frac{\left<\Z^{(t)}\!-\!\alpha_{t}\R^{(t)},\R^{(t)}\right>}{\|\Z^{(t)}\!-\!\alpha_{t}\R^{(t)}\|_F^2}(\Z^{(t)}-\alpha_{t}\R^{(t)})-\R^{(t)}\right\|_F^2\\
		=&-\frac{\left<\Z^{(t)}-\alpha_{t}\R^{(t)},\R^{(t)}\right>^2}{\|\Z^{(t)}-\alpha_{t}\R^{(t)}\|_F^2}+\|\R^{(t)}\|_F^2
\end{align}\ES
where (\ref{2alpha}a) follows from substituting $\X^{(t)}$ in Line 5 of Algorithm \ref{algorithm1}, and (\ref{2alpha}b) follows by substituting $c_t$ in (\ref{parameters}c). Since $\|\R^{(t)}\|_F^2$ is invariant to $\alpha_t$, minimizing $\|\X^{(t)}-\R^{(t)}\|_F^2$ is equivalent to maximizing $\frac{\left<\Z^{(t)}-\alpha_{t}\R^{(t)},\R^{(t)}\right>^2}{\|\Z^{(t)}-\alpha_{t}\R^{(t)}\|_F^2}$. We choose $\alpha_t$ that gives a larger value of $\frac{\left<\Z^{(t)}-\alpha_{t}\R^{(t)},\R^{(t)}\right>^2}{\|\Z^{(t)}-\alpha_{t}\R^{(t)}\|_F^2}$.
\end{remark}

\begin{remark}
Similarly to SVP \cite{jain2010guaranteed} and NIHT \cite{tanner2013normalized}, the convergence of TARM can be analyzed by assuming that the linear operator $\mathcal{A}$ satisfies the restricted isometry property (RIP). The convergence rate of TARM is much faster than those of NIHT and SVP (provided that $\{\alpha_{t}\}$ are sufficiently small). More detailed discussions are presented in Appendix B.
\end{remark}

We emphasize that the parameters ${\mu_t},{\alpha_t}$, and  ${c_{t}}$ in (\ref{parameters}) are actually difficult to evaluate since $\X^{\ast}$ and $\n$ are unknown. This means that Algorithm 1 cannot rely on (\ref{parameters}) to determine $\mu_t, \alpha_t$ and $c_t$. In the following, we focus on how to approximately evaluate these parameters to yield practical algorithms. Based on different choices of the linear operator $\mathcal{A}$, our discussions are divided into two parts, namely, low-rank matrix recovery and matrix completion.

\section{Low-rank Matrix Recovery}
\subsection{Preliminaries}
In this section, we consider recovering $\X^{\ast}$ from measurement in (5) when the linear operator $\mathcal{A}$ is right-orthogonally invariant. Denote the vector form of $\X$ by $\x=\text{vec}(\X)=[\x_1^T,\x_2^T,\cdots,\x_n^T]^T$, where $\x_i$ is the $i$th column of $\X$. The linear operator $\mathcal{A}$ can be generally expressed as $\mathcal{A}(\X)=\A \mr{vec}(\X)=\A\x$ where $\A\in \mathbb{R}^{m\times n}$ is a matrix representation of $\mathcal{A}$. The adjoint operator $\mathcal{A}^{T}: \mathbb{R}^{m}\rightarrow \mathbb{R}^{n_1\times n_2}$ is defined by the transpose of $\A$ with $\x'=\mr{vec}(\X')=\mr{vec}(\mathcal{A}^{T}(\y'))=\A^{T}\y'$.
\begin{definition}
	Consider a linear operator $\mathcal{A}$ with matrix form $\A$, the SVD of $\A$ is $\A=\U_{A}\bm{\Sigma}_{A}\V_{A}^T$, where $\U_{A}$ and $\V_{A}$ are orthogonal matrices and $\bm{\Sigma_{A}}$ is a diagonal matrix. If $\V_{A}$ is a Haar distributed random matrix independent of $\bm{\Sigma}_{A}$, we say that $\mathcal{A}$ is a right-orthogonally invariant linear (ROIL) operator.
\end{definition}
We focus on two types of ROIL operators: partial orthogonal ROIL operators where the matrix form of $\mathcal{A}$ satisfies $\A\A^T=\I$, and Gaussian ROIL operators where the elements of $\A$ are i.i.d. Gaussian with zero mean.  For convenience of discussion, the linear operator $\mathcal{A}$ is normalized such that the length of each row of $\A$ is 1. It is worth noting that from the perspective of the algorithm, $\A$ is deterministic since $\A$ is known by the algorithm. However, the randomness of $\A$ has impact on parameter design and performance analysis, as detailed in what follows.

\subsection{Parameter Design}
We now determine the parameters in (\ref{parameters}) when ROIL operators are involved. We show that (\ref{parameters}) can be approximately evaluated without the knowledge of $\X^{\ast}$. Since $\{c_t\}$ in (\ref{parameters}c) can be readily computed given $\{\alpha_t\}$, we focus on the calculation of $\{\mu_t\}$ and $\{\alpha_t\}$.

We start with $\mu_t$. From (\ref{parameters}a), we have
\BS\label{mu}
\begin{align}
		\mu_{t}&=\frac{\|\X^{(t-1)}-\X^{\ast}\|_F^2}{\left<\mathcal{A}(\X^{(t-1)}-\X^{\ast})-\n,\mathcal{A}(\X^{(t-1)}-\X^{\ast})\right>}\\
		&\approx\frac{\|\X^{(t-1)}-\X^{\ast}\|_F^2}{\|\mathcal{A}(\X^{(t-1)}-\X^{\ast})\|_2^2}\\
		&=\frac{1}{\|\mathcal{A}(\frac{\X^{(t-1)}-\X^{\ast}}{\|\X^{(t-1)}-\X^{\ast}\|_F})\|_2^2}\\
		&=\frac{1}{\tilde{\x}^T\V_A\bm{\Sigma}_A^T\bm{\Sigma}_A\V_A^T\tilde{\x}}\\
		&=\frac{1}{\v_{A}^T\bm{\Sigma}_{A}^T\bm{\Sigma}_A\v_{A}}\\
		&\approx\frac{n}{m}
\end{align}\ES
where (\ref{mu}b) follows from $\left<\n,\mathcal{A}(\X^{(t-1)}-\X^{\ast})\right>\approx0$, (\ref{mu}d) follows by utilizing the matrix form of $\mathcal{A}$ and $\tilde{\x}=\frac{\text{vec}(\X^{(t-1)}-\X^{\ast})}{\|\X^{(t-1)}-\X^{\ast}\|_F}$, and (\ref{mu}e) follows by letting $\v_{A}=\V_A^T\tilde{\x}$. As $\mathcal{A}$ is a ROIL operator, $\V_A$ is haar distributed and is approximately independent of $\tilde{\x}$, implying that $\v_{A}$ is a unit vector uniformly distributed over the sphere $\|\v_{A}\|_2=1$. Then, the last step of (\ref{mu}) follows by noting $\text{Tr}(\bm{\Sigma}_A^T\bm{\Sigma}_A)=m$.

We next consider the approximation of $\alpha_t$. We first note
\BS\label{alpha_t}
\begin{align}
	&\ \ \ \left<\R^{(t)}-\X^{\ast},\X^{(t)}-\X^{\ast}\right>
	\\&=\left<\R^{(t)}-\X^{\ast},c_{t}(\Z^{(t)}-\alpha_{t}\R^{(t)})-\X^{\ast}\right>\\
	&\approx  c_t\left<\R^{(t)}-\X^{\ast},\Z^{(t)}-\alpha_{t}\R^{(t)}\right>
\end{align}\ES
where (\ref{alpha_t}a) follows by substituting $\X^{(t)}$ in line 5 of Algorithm 1, and (\ref{alpha_t}b) follows from $\left<\R^{(t)}-\X^{\ast},\X^{\ast}\right>\approx 0$ implying that the error $\R^{(t)}-\X^{\ast}$ is uncorrelated with the original signal $\X^{\ast}$. Combining (\ref{alpha_t}) and Condition 2 in (\ref{cond2}), we have 

\BS\label{alpha_t1}
\begin{align}
	\alpha_t &=\frac{\left<\R^{(t)}-\X^{\ast},\Z^{(t)}\right>}{\left<\R^{(t)}-\X^{\ast},\R^{(t)}\right>}\\&\approx\frac{\left<\R^{(t)}-\X^{\ast},\mathcal{D}(\R^{(t)})\right>}{\left<\R^{(t)}-\X^{\ast},\R^{(t)}-\X^{\ast}\right>}\\
	&\approx\frac{\left<\R^{(t)}-\X^{\ast},\mathcal{D}(\R^{(t)})\right>}{nv_t}\\
	&\approx\frac{1}{n}\sum_{i,j}\frac{\partial \mathcal{D}(\R^{(t)})}{\partial R_{i,j}^{(t)}}\\
	&=\frac{1}{n}\mr{div}(\mathcal{D}(\R^{(t)}))
\end{align}\ES
where (\ref{alpha_t1}b) follows from $\Z^{(t)}=\mathcal{D}(\R^{(t)})$ and $\left<\R^{(t)}-\X^{\ast},\X^{\ast} \right >\approx0$, (\ref{alpha_t1}c) follows from the Gaussian approximation that the elements of $\R^{(t)}-\X^{\ast}$ are i.i.d. Gaussian with zero mean and variance $v_t$, (\ref{alpha_t1}d) follows from Stein's lemma \cite{stein1981estimation} since we approximate the entries of $\R^{(t)}-\X^{\ast}$ as i.i.d. Gaussian distributed, and (\ref{alpha_t1}e) is from the definition of the divergence $\text{div}(\cdot)$.
 
\subsection{State Evolution}
We now characterize the performance of TARM for low-rank matrix recovery. Recall that some assumptions are involved in determining the algorithm parameters in the preceding subsection. We formally present these assumptions as follows.

\begin{assumption}
For each iteration $t$, the orthogonal matrix $\V_A$ is independent of Module A's input estimation error $\X^{(t-1)}-\X^{\ast}$.
\end{assumption}

\begin{assumption}
For each iteration $t$, the output error of Module A, given by $\R^{(t)}-\X^{\ast}$, resembles an i.i.d. Gaussian noise, i.e., the elements of $\R^{(t)}-\X^{\ast}$ are independently and identically drawn from $\mathcal{N}(0, v_t)$, where $v_t$ is the output variance of Module A at iteration $t$.
\end{assumption}

The above two assumptions will be verified by the numerical results presented in the next subsection. Similar assumptions have been introduced in the design of Turbo-CS in \cite{ma2014turbo} (see also \cite{ma2017orthogonal}). Later, these assumptions were rigorously analyzed in \cite{rangan2017vector,takeuchi2017rigorous} using the conditioning technique \cite{bayati2011dynamics}. Based on that, state evolution was established to characterize the behavior of the Turbo-CS algorithm. However, the analyses in \cite{rangan2017vector,takeuchi2017rigorous} are focused on the case that the denoiser $\mathcal{D}(\cdot)$ is separable, i.e., the function $\mathcal{D}(\cdot)$ is individually applied to each element of the input, while the denoisers involved here (such as the best-rank-$r$ appriximation in (6)) are non-separable. Therefore, the technique in \cite{rangan2017vector}-\cite{bayati2011dynamics} cannot be directly applied here. The recent work \cite{berthier2017state} on state evolution of AMP for non-seperable denoisers may shed some light on a possible rigorous justification of the assumptions. In this paper, we establish state evolution based on the two assumptions. We leave the rigorous proof (without imposing the assumptions) as future work.

Assumptions 1 and 2 allow to decouple Module A and Module B in the analysis of the TARM algorithm. We derive two mean square error (MSE) transfer functions, one for each module, to characterize the behavior of the TARM algorithm.

We first consider the MSE behavior of Module A. Denote the output MSE of Module A at iteration $t$ by
\begin{align}
	MSE_A^{(t)}=\frac{1}{n}\|\R^{(t)}-\X^{\ast}\|_F^2.
\end{align}
The following theorem gives the asymptotic MSE of Module A when the dimension of $\X^{\ast}$ goes to infinity, with the proof given in Appendix C.

\begin{theorem}
	Assume that Assumption 1 holds, and let $\mu=\frac{n}{m}$. Then,
	\begin{align}
		MSE_A^{(t)}\xrightarrow[]{\text{a.s.}}f(\tau_t)
	\end{align}
as $m,n\rightarrow\infty$ with $\frac{m}{n}\rightarrow \delta$, where $\frac{1}{n}\|\X^{(t-1)}-\X^{\ast}\|_F^2\rightarrow \tau_t$ as $n\rightarrow\infty$.
For partial orthogonal ROIL operator $\mathcal{A}$,
\BS\label{theorem1}
	\begin{align}
		f(\tau)= \left(\frac{1}{\delta}-1\right)\tau+\sigma^2
	\end{align}
	and for Gaussian ROIL operator $\mathcal{A}$,
	\begin{align}
		f(\tau)= \frac{1}{\delta}\tau+\sigma^2.
	\end{align}
	\ES
\end{theorem}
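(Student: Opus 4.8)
The plan is to reduce $MSE_A^{(t)}=\frac{1}{n}\|\R^{(t)}-\X^{\ast}\|_F^2$ to a spectral average of the singular values of $\A$ plus a noise term, each of which concentrates almost surely under Assumption 1. First I would vectorize. Writing $\bm{h}=\mr{vec}(\X^{(t-1)}-\X^{\ast})$ and using $\y=\A\x^{\ast}+\n$ together with Line 3 of Algorithm \ref{algorithm1} and $\mu=n/m$, the error in vector form is
\begin{align}
\mr{vec}(\R^{(t)}-\X^{\ast})=(\I-\mu\A^T\A)\bm{h}+\mu\A^T\n.
\end{align}
Hence $MSE_A^{(t)}=\frac{1}{n}\|(\I-\mu\A^T\A)\bm{h}\|_2^2+\frac{2\mu}{n}\langle(\I-\mu\A^T\A)\bm{h},\A^T\n\rangle+\frac{\mu^2}{n}\|\A^T\n\|_2^2$, and I would treat the three terms separately.

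For the signal term I would diagonalize via the SVD $\A=\U_A\bm{\Sigma}_A\V_A^T$, so that $\I-\mu\A^T\A=\V_A(\I-\mu\bm{\Sigma}_A^T\bm{\Sigma}_A)\V_A^T$ and $\|(\I-\mu\A^T\A)\bm{h}\|_2^2=\tilde{\bm{h}}^T(\I-\mu\bm{\Sigma}_A^T\bm{\Sigma}_A)^2\tilde{\bm{h}}$ with $\tilde{\bm{h}}=\V_A^T\bm{h}$. By Assumption 1, $\V_A$ is Haar-distributed and independent of $\bm{h}$, so $\tilde{\bm{h}}$ is uniform on the sphere of radius $\|\bm{h}\|_2$ and, crucially, independent of $\bm{\Sigma}_A$. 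A concentration argument for quadratic forms in uniform-on-sphere vectors (equivalently, writing $\tilde{\bm{h}}=\|\bm{h}\|_2\,\bm{g}/\|\bm{g}\|_2$ for $\bm{g}$ standard Gaussian and invoking a law of large numbers) then yields, almost surely,
\begin{align}
\frac{1}{n}\|(\I-\mu\A^T\A)\bm{h}\|_2^2\xrightarrow{\mr{a.s.}}\tau_t\lim_{n\to\infty}\frac{1}{n}\sum_{i=1}^n(1-\mu s_i^2)^2,
\end{align}
where $s_i$ are the singular values of $\A$ (with $n-m$ of them equal to zero) and $\frac{1}{n}\|\bm{h}\|_2^2\to\tau_t$.

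I would then evaluate the limiting spectral average in each operator class. For a \emph{partial orthogonal} operator, $\A\A^T=\I$ forces the $m$ nonzero singular values to equal one, so the average is $\frac{1}{n}[(n-m)\cdot1+m(1-\mu)^2]$; substituting $\mu=n/m$ and $m/n\to\delta$ collapses this to $1/\delta-1$. For a \emph{Gaussian} operator I would instead use the Marchenko--Pastur law: expanding $\frac{1}{n}\sum_{i=1}^n(1-\mu s_i^2)^2=\frac{n-m}{n}+\frac{1}{n}[m-2\mu\,\mr{Tr}(\A\A^T)+\mu^2\,\mr{Tr}((\A\A^T)^2)]$ and invoking the normalization $\mr{Tr}(\A\A^T)=m$ together with the second-moment identity $\frac{1}{m}\mr{Tr}((\A\A^T)^2)\to1+\delta$ (a standard Wishart/Marchenko--Pastur moment computation), substitution of $\mu=n/m$ yields the limit $1/\delta$. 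These reproduce the two signal coefficients in (\ref{theorem1}).

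Finally, the cross term is linear in $\n$ and vanishes almost surely because $\n$ is zero-mean and asymptotically uncorrelated with $\A\bm{h}$ --- precisely the approximation $\langle\n,\mathcal{A}(\cdot)\rangle\approx0$ already invoked in (\ref{mu}b). The noise term $\frac{\mu^2}{n}\|\A^T\n\|_2^2=\frac{\mu^2}{n}\n^T\A\A^T\n$ concentrates, through $\E[\n^T\A\A^T\n]=\sigma^2\,\mr{Tr}(\A\A^T)$ and the trace normalization, to the additive constant $\sigma^2$ appearing in both branches of (\ref{theorem1}); summing the three contributions gives $f(\tau_t)$. I expect the \emph{main obstacle} to be the rigorous almost-sure concentration of the signal quadratic form: one must decouple $\tilde{\bm{h}}$ from the spectrum of $\A$ (justified by Assumption 1 and Haar invariance) and then show that $\frac{1}{n}\tilde{\bm{h}}^T(\I-\mu\bm{\Sigma}_A^T\bm{\Sigma}_A)^2\tilde{\bm{h}}$ converges to $\tau_t$ times the mean of $(1-\mu s^2)^2$ against the limiting spectral distribution. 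This requires convergence of the empirical spectral distribution of $\A^T\A$ (immediate in the partial orthogonal case, Marchenko--Pastur in the Gaussian case) and a Hanson--Wright/sphere-concentration bound to control the fluctuations; the Gaussian branch additionally hinges on the second spectral moment above, which is where the $1+\delta$, and hence the coefficient $1/\delta$, ultimately originates.
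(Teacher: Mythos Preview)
Your proposal is correct and follows essentially the same route as the paper's proof in Appendix~C: expand $\|\R^{(t)}-\X^{\ast}\|_F^2$, drop the cross terms with $\n$ by independence, and evaluate the signal contribution via the Haar independence of $\V_A$ (Assumption~1) together with the relevant spectral information---$\A\A^T=\I$ in the partial orthogonal case and the Wishart second moment $\mr{Tr}((\A^T\A)^2)$ in the Gaussian case. Your organization through the single spectral average $\frac{1}{n}\sum_i(1-\mu s_i^2)^2$ is somewhat cleaner than the paper's separate case-by-case expansions, but the underlying ingredients and computations coincide.
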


We now consider the MSE behavior of Module B. We start with the following useful lemma, with the proof given in Appendix D.
\begin{lemma}
	Assume that $\R^{(t)}$ satisfies Assumption 2, $\|\X^{\ast}\|_F^2=n$, and the empirical distribution of eigenvalue $\theta$ of $\frac{1}{n_2}\X^{\ast T}\X^{\ast}$ converges almost surely to the density function $p(\theta)$ as $n_1,n_2,r\rightarrow \infty$ with $\frac{n_1}{n_2}\rightarrow\rho,\frac{r}{n_2}\rightarrow\lambda$. Then,
\BS\label{ac}
	\begin{align}
		\alpha_{t}&\xrightarrow[]{\text{a.s.}}\alpha(v_t)\\
		c_{t}&\xrightarrow[]{\text{a.s.}}c(v_t)
	\end{align}\ES
as $n_1,n_2,r\rightarrow \infty$ with $\frac{n_1}{n_2}\rightarrow\rho,\frac{r}{n_2}\rightarrow\lambda$, where 
\BS\label{alpha_c} 
\begin{align}
	\alpha(v)&=\left|1\!-\!\frac{1}{\rho}\right|\lambda\!+\!\frac{1}{\rho}\lambda^2\!+\!\!2\left(\min\left(\!1,\frac{1}{\rho}\!\right)\!-\!\frac{\lambda}{\rho}\right)\!\lambda\Delta_1(v)\\
	c(v)&=\frac{1+\lambda(1+\frac{1}{\rho})v+\lambda  v^2\Delta_2-\alpha(v)(1+v)}{(1\!-\!2\alpha(v))(1\!+\!\lambda(1\!+\!\frac{1}{\rho})v+\lambda v^2\Delta_2)+\alpha(v)^2(1\!+\!v)}
\end{align}\ES
with $\Delta_1$ and $\Delta_2$ defined by
\BS\label{delta}
\begin{align}
	\Delta_1(v)&=\int_0^{\infty}\frac{(v+\theta^2)(\rho v+\theta^2)}{(\sqrt{\rho}v-\theta^2)^2} p(\theta)d\theta\\
	\Delta_2&=\int_{0}^{\infty} \frac{1}{\theta^2} p(\theta)d \theta.
\end{align}\ES
\end{lemma}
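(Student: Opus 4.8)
\emph{Proof strategy.} The plan is to reduce both $\alpha_t$ and $c_t$ to explicit functions of the singular values $\{\sigma_i\}$ of $\R^{(t)}$ and then pass to the limit using random matrix theory for the additive spiked model $\R^{(t)}=\X^{\ast}+\N$, where $\N$ has i.i.d.\ $\mathcal{N}(0,v_t)$ entries by Assumption 2. Starting from the expression $\alpha_t\approx\tfrac{1}{n}\mr{div}(\mathcal{D}(\R^{(t)}))$ in (\ref{alpha_t1}), the first task is to evaluate the divergence of the best rank-$r$ approximation. Viewing $\mathcal{D}$ as a spectral function that keeps $f(\sigma)=\sigma$ on the top $r$ singular values and sets $f(\sigma)=0$ otherwise, a standard differentiation of the SVD yields a closed form made of a separable part $r(|n_1-n_2|+r)$ and a cross-term part $2\sum_{i=1}^{r}\sum_{j=r+1}^{\min(n_1,n_2)}\frac{\sigma_i^2}{\sigma_i^2-\sigma_j^2}$. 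Dividing by $n=n_1n_2$, the separable part converges to $|1-\tfrac{1}{\rho}|\lambda+\tfrac{1}{\rho}\lambda^2$, reproducing the first two terms of $\alpha(v)$ in (\ref{alpha_c}a).

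The heart of the argument is the asymptotic value of the cross term, for which I would invoke the theory of deformed rectangular Gaussian ensembles. Because $\N$ is i.i.d.\ Gaussian, the bottom singular values $\{\sigma_j\}_{j>r}$ have an empirical distribution that converges almost surely to a Marchenko--Pastur-type bulk fixed by $\rho$ and $v$, whereas each of the top $r$ spike singular values $\sigma_i$ converges almost surely to a deterministic function of the corresponding signal eigenvalue $\theta_i$ and of $v$ (a BBP-type formula). Replacing the inner average $\frac{1}{n_1}\sum_{j>r}\frac{\sigma_i^2}{\sigma_i^2-\sigma_j^2}$ by its integral against the limiting bulk density, substituting the spike limits, and then averaging over the spikes (whose empirical distribution of $\theta$ tends to $p(\theta)$) collapses the double sum into $2(\min(1,\tfrac{1}{\rho})-\tfrac{\lambda}{\rho})\lambda\,\Delta_1(v)$, with $\Delta_1(v)$ exactly the integral in (\ref{delta}a). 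This proves (\ref{ac}a).

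For $c_t$ I would exploit the orthogonality of the SVD. Since $\Z^{(t)}$ is the projection of $\R^{(t)}$ onto its leading rank-$r$ subspace, $\langle\Z^{(t)},\R^{(t)}\rangle=\|\Z^{(t)}\|_F^2=\sum_{i\le r}\sigma_i^2$ and $\|\R^{(t)}\|_F^2=\sum_i\sigma_i^2$, so (\ref{c_t}) reduces to the rational function $c_t=\frac{\bar S-\alpha_t\bar T}{(1-2\alpha_t)\bar S+\alpha_t^2\bar T}$ with $\bar S=\tfrac{1}{n}\|\Z^{(t)}\|_F^2$ and $\bar T=\tfrac{1}{n}\|\R^{(t)}\|_F^2$. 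Using $\|\X^{\ast}\|_F^2=n$ and the negligibility of the signal--noise cross term gives $\bar T\to 1+v$, while summing the spike limits gives $\bar S\to 1+\lambda(1+\tfrac{1}{\rho})v+\lambda v^2\Delta_2$, the $v^2$ coefficient being precisely $\Delta_2$ in (\ref{delta}b). Substituting $\alpha_t\to\alpha(v)$ together with these two limits into the rational expression yields $c(v)$ in (\ref{alpha_c}b), establishing (\ref{ac}b).

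I expect the random matrix step to be the main obstacle. The delicate points are: rigorously establishing the almost-sure limits of the spike singular values jointly with the almost-sure convergence of the bulk spectral distribution for the deformed rectangular Gaussian model; and, most of all, justifying that the cross-term double sum concentrates on its integral limit. The latter needs uniform control of the gap between the spikes and the bulk edge, so that the factors $1/(\sigma_i^2-\sigma_j^2)$ stay bounded, together with an interchange of the empirical sum and its limiting integral. The divergence formula itself also requires care at coincident singular values, though these do not occur generically for the well-separated top $r$ spikes.
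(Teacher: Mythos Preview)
Your overall strategy matches the paper's Appendix~D exactly: evaluate $\alpha_t$ via the divergence formula for the best rank-$r$ projection, apply the Benaych-Georges--Nadakuditi spike asymptotics for the top-$r$ singular values, and reduce $c_t$ to a rational function of $\|\Z^{(t)}\|_F^2$ and $\|\R^{(t)}\|_F^2$ via the orthogonality $\langle\Z^{(t)},\R^{(t)}\rangle=\|\Z^{(t)}\|_F^2$. The separable part of the divergence and the entire treatment of $c_t$ are essentially identical to what the paper does.

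The one substantive discrepancy is your handling of the cross term $\sum_{i\le r}\sum_{j>r}\sigma_i^2/(\sigma_i^2-\sigma_j^2)$. You propose replacing the inner sum by its integral against the Marchenko--Pastur bulk and assert this collapses to the paper's $\Delta_1(v)$. It does not. The paper never integrates over the bulk; it invokes its Fact~1 in the form ``$\sigma_j\to\sqrt{n_2 v_t}(1+\sqrt{\rho})$ for all $j>r$,'' i.e.\ every bulk singular value is replaced by the single upper-edge value. Substituting that edge value together with the spike limit $\sigma_i^2\to n_2(v+\theta_i^2)(\rho v+\theta_i^2)/\theta_i^2$ yields $\sigma_i^2-\sigma_j^2\to n_2(\sqrt{\rho}\,v-\theta_i^2)^2/\theta_i^2$, and \emph{that} algebraic identity is precisely what produces the integrand $(v+\theta^2)(\rho v+\theta^2)/(\sqrt{\rho}\,v-\theta^2)^2$ defining $\Delta_1$. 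A genuine MP integral gives something different: in the square case $\rho=1$, $v=1$, the bulk integral $\int z/(z-x)\,d\mu_{\mathrm{MP}}(x)$ evaluated at the spike $z=(1+\theta^2)^2/\theta^2$ equals $1+\theta^{-2}$ (via the Stieltjes transform), whereas the edge substitution gives $(1+\theta^2)^2/(1-\theta^2)^2$. So to reproduce the lemma as stated you must adopt the paper's edge-only shortcut for the bulk singular values rather than integrate against the full MP density; if you insist on the rigorous MP integration, be prepared to arrive at a different limiting expression than the $\Delta_1$ in (\ref{delta}a).
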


Denote the output MSE of Module B at iteration $t$ by
\begin{align}
	MSE_B^{(t)}=\frac{1}{n}\|\X^{(t)}-\X^{\ast}\|_F^2.
\end{align}
The output MSE of Module B is characterized by the following theorem.

\begin{theorem}
Assume that Assumption 2 holds, and let $\|\X^{\ast}\|_F^2=n$. Then, the output MSE of Module B 
	\begin{align}\label{mseb}
		MSE_B^{(t)}\!\xrightarrow[]{\text{a.s.}}\!g(v_t)\!\triangleq\! \frac{v_t-\lambda\left(1+\frac{1}{\rho}\right) v_t-\lambda v_t^2\Delta_2}{\frac{v_t-\lambda\left(1+\frac{1}{\rho}\right) v_t-\lambda v_t^2\Delta_2}{1+\lambda\left(1+\frac{1}{\rho}\right)v_t+\lambda v_t^2\Delta_2}\alpha(v_t)^2\!+\!(1\!-\!\alpha(v_t))^2}\!-\!v_t
	\end{align}
	as $n_1,n_2,r\rightarrow \infty$ with $\frac{n_1}{n_2}\rightarrow\rho,\frac{r}{n_2}\rightarrow\lambda$, where $\alpha$ and $\Delta_2$ are given in Lemma 2, and $\frac{1}{n}\|\R^{(t)}-\X^{\ast}\|_F^2\xrightarrow[]{\text{a.s.}}v_t$.
\end{theorem}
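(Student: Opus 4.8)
The plan is to collapse $MSE_B^{(t)}$ onto a single scalar spectral quantity and then borrow the random-matrix asymptotics already proved for Lemma 2. The first and most useful step is to exploit Condition 2. Because the parameters of Algorithm \ref{algorithm1} are chosen so that $\left<\R^{(t)}-\X^{\ast},\X^{(t)}-\X^{\ast}\right>=0$ in (\ref{cond2}) holds exactly, the three points $\X^{\ast},\R^{(t)},\X^{(t)}$ form a right angle at $\X^{\ast}$, and the Pythagorean identity gives
\[
\|\X^{(t)}-\X^{\ast}\|_F^2=\|\X^{(t)}-\R^{(t)}\|_F^2-\|\R^{(t)}-\X^{\ast}\|_F^2 .
\]
Dividing by $n$ and using the stated limit $\frac{1}{n}\|\R^{(t)}-\X^{\ast}\|_F^2\to v_t$ reduces the theorem to evaluating $\frac{1}{n}\|\X^{(t)}-\R^{(t)}\|_F^2$; the trailing $-v_t$ in $g(v_t)$ is produced by exactly this subtraction.

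Next I would feed in Line 5 of Algorithm \ref{algorithm1} together with the closed form of $c_t$. This is precisely the manipulation already recorded in (\ref{2alpha}), which yields
\[
\tfrac{1}{n}\|\X^{(t)}-\R^{(t)}\|_F^2
=\tfrac{1}{n}\|\R^{(t)}\|_F^2
-\frac{\left(\tfrac{1}{n}\left<\Z^{(t)}-\alpha_t\R^{(t)},\R^{(t)}\right>\right)^2}
{\tfrac{1}{n}\|\Z^{(t)}-\alpha_t\R^{(t)}\|_F^2}.
\]
The key simplification is that, since $\Z^{(t)}=\mathcal{D}(\R^{(t)})$ is the best rank-$r$ truncation of $\R^{(t)}$, orthonormality of the singular vectors forces $\left<\Z^{(t)},\R^{(t)}\right>=\|\Z^{(t)}\|_F^2=\sum_{i=1}^r\sigma_i^2$. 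Consequently every inner product above expands into just three scalars: $\tfrac{1}{n}\|\R^{(t)}\|_F^2$, the parameter $\alpha_t$, and $s_t:=\tfrac{1}{n}\sum_{i=1}^r\sigma_i^2$.

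These three scalars come from concentration and from Lemma 2. The law of large numbers together with $\tfrac{1}{n}\left<\R^{(t)}-\X^{\ast},\X^{\ast}\right>\to0$ gives $\tfrac{1}{n}\|\R^{(t)}\|_F^2\to 1+v_t$; the convergence $\alpha_t\to\alpha(v_t)$ is Lemma 2 itself; and the sum of the top-$r$ squared singular values converges to $s_t\to 1+P$ with $P:=\lambda(1+\tfrac{1}{\rho})v_t+\lambda v_t^2\Delta_2$, this being exactly the spectral block that already surfaces in $c(v)$ of Lemma 2. Substituting these limits, writing $\frac{1}{n}\|\X^{(t)}-\R^{(t)}\|_F^2=\big((1+v_t)D-N\big)/D$ with $N,D$ the limiting numerator and denominator of the fraction above, and expanding, the numerator telescopes to $(1+P)(v_t-P)$; recognizing $D=(v_t-P)\alpha(v_t)^2+(1+P)(1-\alpha(v_t))^2$ then gives $\frac{1}{n}\|\X^{(t)}-\R^{(t)}\|_F^2\to g(v_t)+v_t$, and subtracting $v_t$ from the Pythagorean step yields $MSE_B^{(t)}\to g(v_t)$. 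I have checked that this algebra closes identically.

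The main obstacle is the single spectral limit $s_t\to 1+P$: the top-$r$ singular values of the low-rank-plus-Gaussian matrix $\R^{(t)}=\X^{\ast}+(\R^{(t)}-\X^{\ast})$ sit in the proportional-rank regime $r/n_2\to\lambda$, so they cannot be handled by a fixed-rank spiked (BBP) argument and must instead be read off the deformed limiting spectrum through the density $p(\theta)$ and the integrals $\Delta_1,\Delta_2$. This is exactly the random-matrix computation performed for Lemma 2, so in proving Theorem 3 I would cite it rather than reproduce it; everything else, namely the Pythagorean reduction, the truncated-SVD identity $\left<\Z^{(t)},\R^{(t)}\right>=\|\Z^{(t)}\|_F^2$, and the closing algebra, is routine.
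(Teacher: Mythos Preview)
Your proposal is correct and follows essentially the same route as the paper. The paper also uses Condition~2 to obtain the Pythagorean identity $\|\X^{(t)}-\X^{\ast}\|_F^2=\|\X^{(t)}-\R^{(t)}\|_F^2-\|\R^{(t)}-\X^{\ast}\|_F^2$, then reduces $\|\X^{(t)}-\R^{(t)}\|_F^2$ via the formula for $c_t$ and the truncated-SVD identity $\langle\Z^{(t)},\R^{(t)}\rangle=\|\Z^{(t)}\|_F^2$ (packaged as a separate Lemma~3), and finally substitutes the spectral limits $\tfrac{1}{n}\|\Z^{(t)}\|_F^2\to 1+P$ and $\tfrac{1}{n}\|\R^{(t)}\|_F^2\to 1+v_t$ already derived inside the proof of Lemma~2; your identification of the spiked-spectrum limit $s_t\to 1+P$ as the only nontrivial ingredient, to be cited from Lemma~2 rather than redone, matches exactly how the paper proceeds.
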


\begin{remark}
$\Delta_1$ and $\Delta_2$ in (\ref{delta}) may be difficult to obtain since $p(\theta)$ is usually unknown in practical scenarios. We now introduce an approximate MSE expression that does not depend on $p(\theta)$:
\begin{align}\label{uppermseb}
	g(v_t)\approx \bar{g}(v_t)\triangleq \frac{v_t-\lambda(1+\frac{1}{\rho}) v_t}{(1-\alpha)^2}-v_t
\end{align}
where $\alpha=\alpha(0)=\left|1-\frac{1}{\rho}\right|\lambda-\frac{1}{\rho}\lambda^2+2\min\left(1,\frac{1}{\rho}\right)\lambda$. Compared with $g(v_t)$, $\bar{g}(v_t)$ omits two terms $-\lambda v_t^2\Delta$ and $\frac{v_t-\lambda\left(1+\frac{1}{\rho}\right) v_t-\lambda v_t^2\Delta}{1+\lambda\left(1+\frac{1}{\rho}\right)v_t+\lambda v_t^2\Delta}\alpha(v_t)^2$ and replaces $\alpha(v_t)$ by $\alpha$. Recall that $v_t$ is the mean square error at the $t$-iteration. As the iteration proceeds, we have $v_t\ll 1$, and hence $g(v_t)$ can be well approximated by $\bar{g}(v_t)$, as seen later from Fig. \ref{fig_state}.
\end{remark}

Combining Theorems 1 and 2, we can characterize the MSE evolution of TARM by
\BS\label{statee}
\begin{align}
	v_t&=f(\tau_t)\\
	\tau_{t+1} &=g(v_{t}).
\end{align}\ES
The fixed point of TARM's MSE evolution in (\ref{statee}) is given by
\begin{align}\label{sefunction}
	\tau^{\ast}=g(f(\tau^{\ast})).
\end{align}
The above fixed point equation can be used to analysis the phase transition curves of the TARM algorithm. It is clear that the fixed point $\tau^{\ast}$ of (\ref{sefunction}) is a function of $\{\delta,\rho,\lambda,\Delta,\sigma\}$. For any given $\{\delta,\rho,\lambda,\Delta,\sigma\}$, we say that the TARM algorithm is successful if the corresponding $\tau^{\ast}$ is below a certain predetermined threshold. The critical values of $\{\delta,\rho,\lambda,\Delta,\sigma\}$ define the phase transition curves of the TARM algorithm.

\subsection{Numerical Results}
Some simulation settings are as follows. For the case of partial orthogonal ROIL operators, we generate a partial orthogonal ROIL operator with the matrix form $\A=\S\W$, where $\S\in \mathbb{R}^{m\times n}$ is a random perturbation matrix and $\W\in \mathbb{R}^{n\times n}$ is a discrete cosine transform (DCT) matrix. For the case of Gaussian ROIL operators, we generate an i.i.d. Gaussian random matrix of size $m\times n$ with elements drawn from $\mathcal{N}(0,\frac{1}{n})$. The rank-$r$ matrix $\X^{\ast}\in \mathbb{R}^{n_1\times n_2}$ is generated by the product of two i.i.d. Gaussian matrices of size $n_1\times r$ and $r\times n_2$.

\subsubsection{Verification of the assumptions}
We first verify Assumption 1 using Table \ref{table1}. Recall that if Assumption 1 holds, the approximations in the calculation of $\mu_t$ in (\ref{mu}) become accurate. Thus, we verify Assumption 1 by comparing the value of $\mu_t$ calculated by (\ref{parameters}a) with $\frac{n}{m}$ by (\ref{mu}). We record the $\mu_t$ of the first 8 iterations of TARM in Table \ref{table1} for low-rank matrix recovery with a partial orthogonal ROIL operator. As shown in Table \ref{table1}, the approximation $\mu_t=\frac{n}{m}$ is close to the real value calculated by (\ref{parameters}a) which verifies Assumption 1. We then verify Assumption 2 using Fig. \ref{qqplots}, where we plot the QQplots of the input estimation errors of Module A with partial orthogonal and Gaussian ROIL operators. The QQplots show that the output errors of Module A closely follow a Gaussian distribution, which agrees with Assumption 2.

\begin{table*}[!h]
\centering
\begin{tabular}{|c|c|c|c|c|c|c|c|c|}
\hline
 iteration $t$  & 1 & 2 & 3 & 4 & 5& 6&  7&  8\\ \hline
 $\frac{n}{m}=2.5$ &2.4960 & 2.4988  &  2.4944 &   2.4948& 2.4938 &2.4950  & 2.4976 & 2.4968 \\ \hline
$\frac{n}{m}=3.3333$ &  3.3283 & 3.3273  &  3.3259 & 3.3268  & 3.3295 & 3.3267 &  3.3299& 3.3269 \\ \hline
$\frac{n}{m}=5$ & 4.9994  &  4.9998 &  5.0034 &4.9995   & 5.0005 & 5.0058 & 5.0078 & 5.0011 \\ \hline
\end{tabular}
\caption{$\mu_t$ calculated by (10a) for the 1st to 8th iterations of one random realization of the algorithm with a partial orthogonal ROIL operator. $n_1=n_2=1000$, $r=30$, $\sigma=10^{-5}$.}\label{table1}
\end{table*}

\begin{figure}
	\centering
	\includegraphics[width=0.5\linewidth]{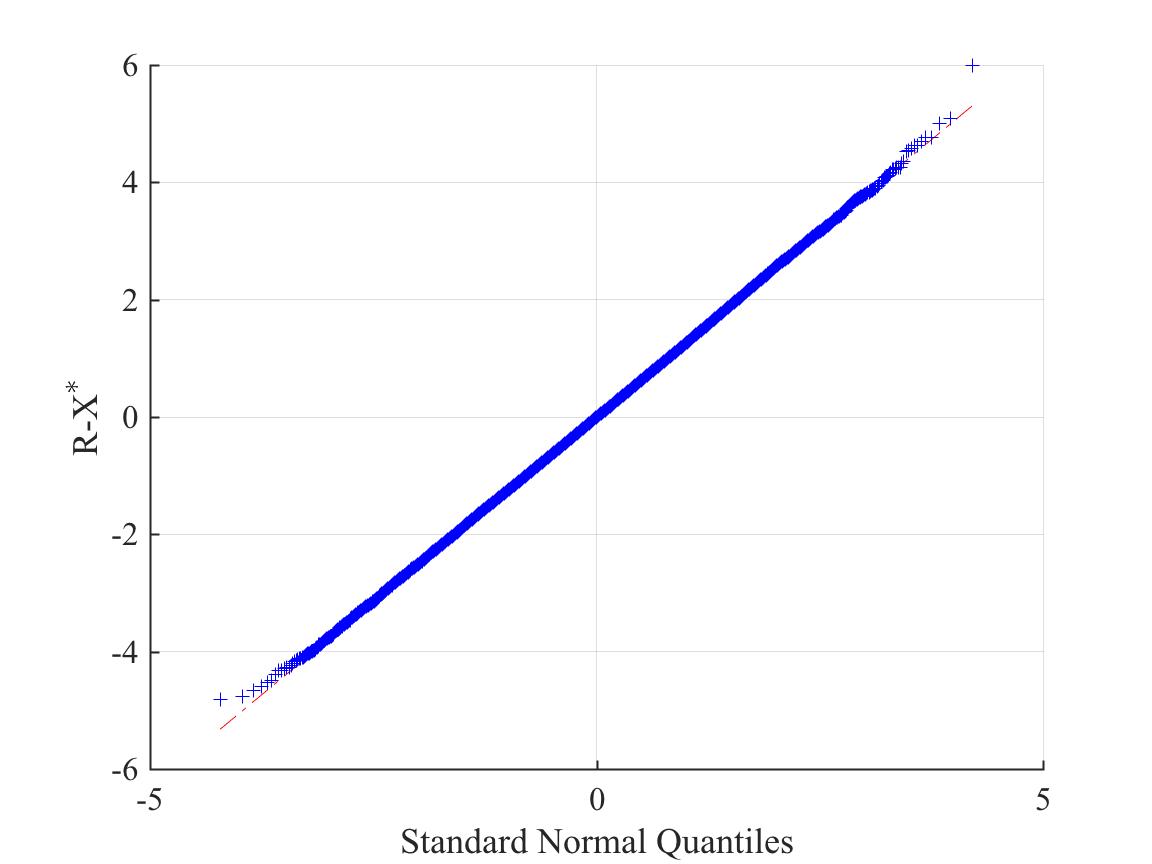}\includegraphics[width=0.5\linewidth]{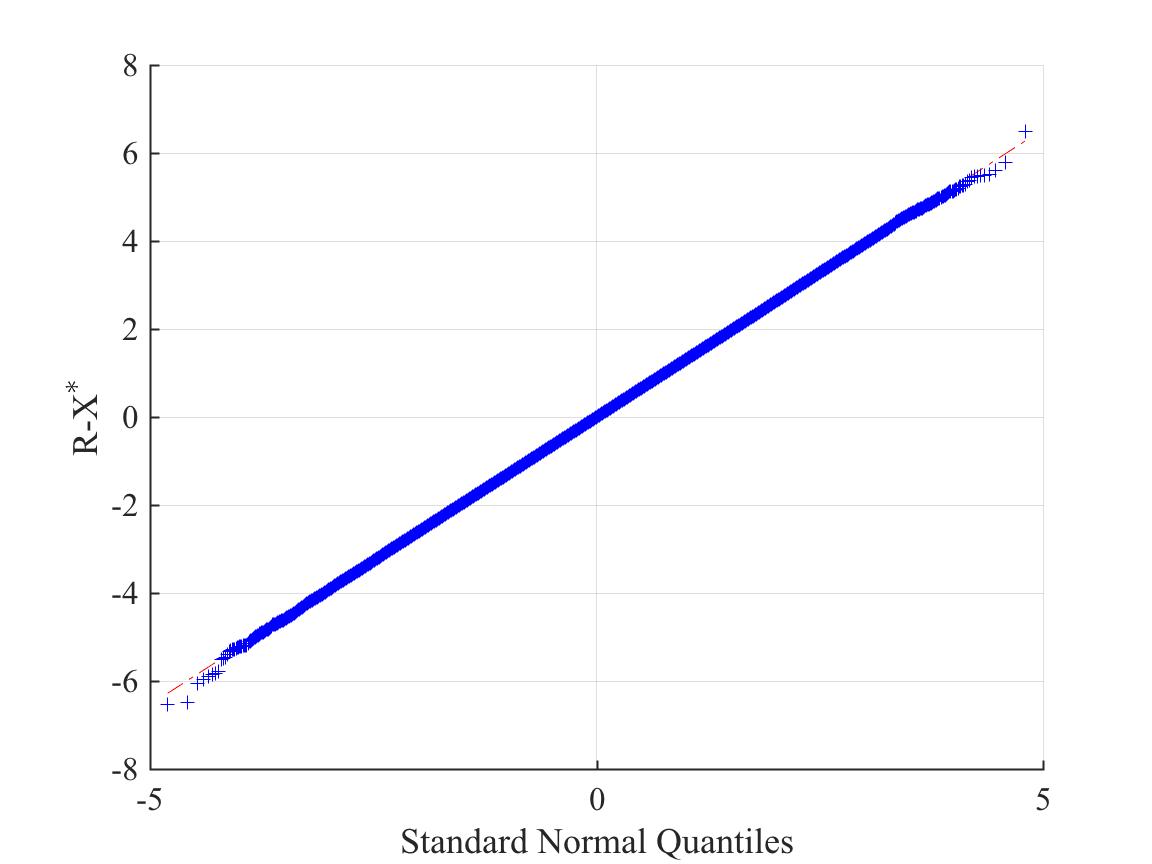}
	\caption{The QQplots of the output error of Module B in the 2nd iteration of TARM. Left: $\mathcal{A}$ is a Gaussian ROIL operator. Right: $\mathcal{A}$ is a partial orthogonal ROIL operator. Simulation settings: $n_1=100,n_2=120, \frac{m}{n_1 n_2}=0.3, \frac{r}{n_2}=0.25, \sigma^2=0$.}\label{qqplots}
\end{figure}

\subsubsection{State evolution}
We now verify the state evolution of TARM given in (\ref{statee}). We plot the simulation performance of TARM and the predicted performance by the state evolution in Fig. \ref{fig_state}. From the two subfigures in Fig. \ref{fig_state}, we see that the state evolution of TARM is accurate when the dimension of $\X^{\ast}$ is large enough for both partial orthogonal and Gaussian ROIL operators. We also see that the state evolution with $g(\cdot)$ replaced by the approximation in (\ref{uppermseb}) (referred to as "Approximation" in Fig. \ref{fig_state}) provides reasonably accurate performance predictions. This makes the upper bound very useful since it does not require the knowledge of the singular value distribution of $\X^{\ast}$.

\begin{figure}[!ht]
	\centering
	\includegraphics[width=0.5\linewidth]{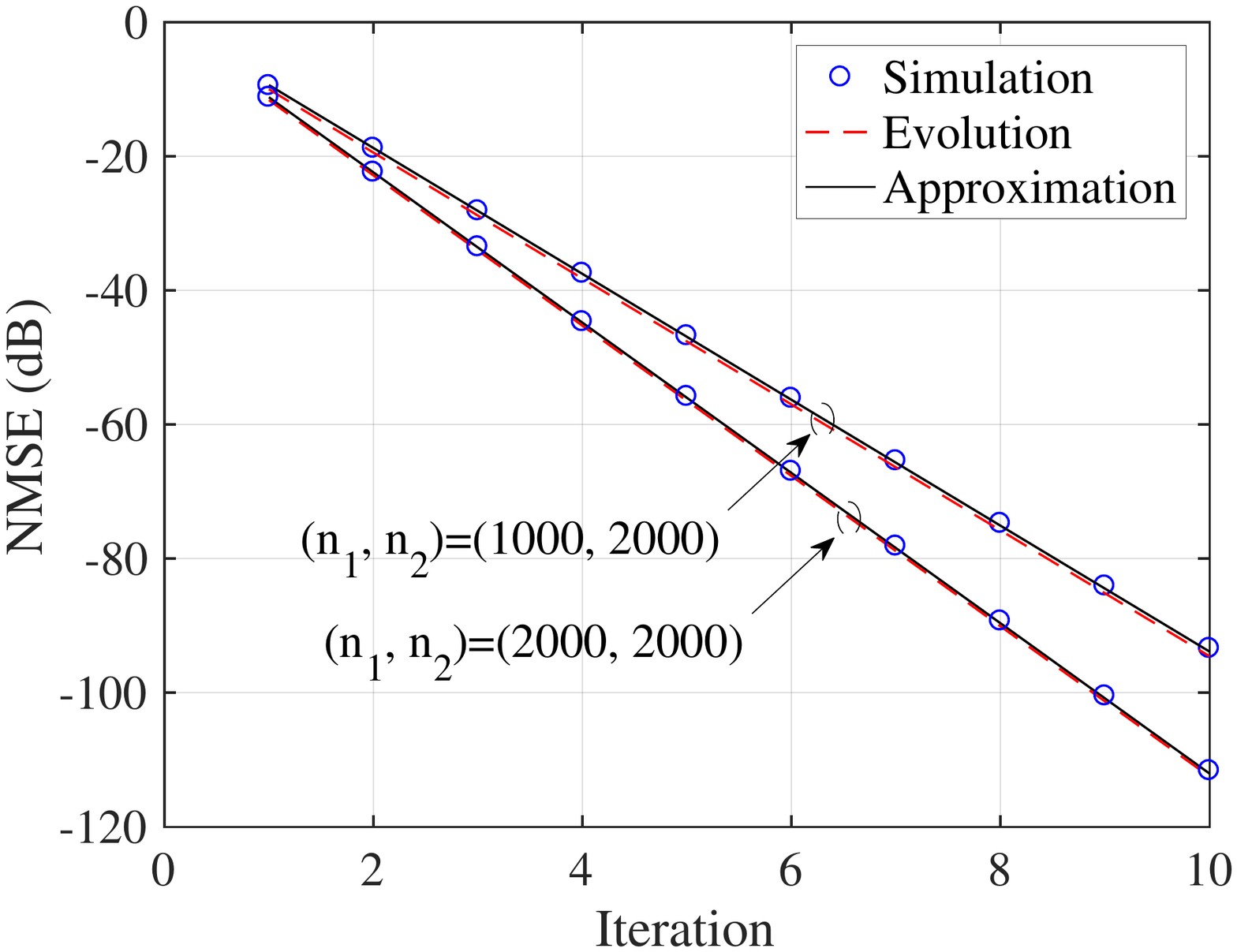}\includegraphics[width=0.5\linewidth]{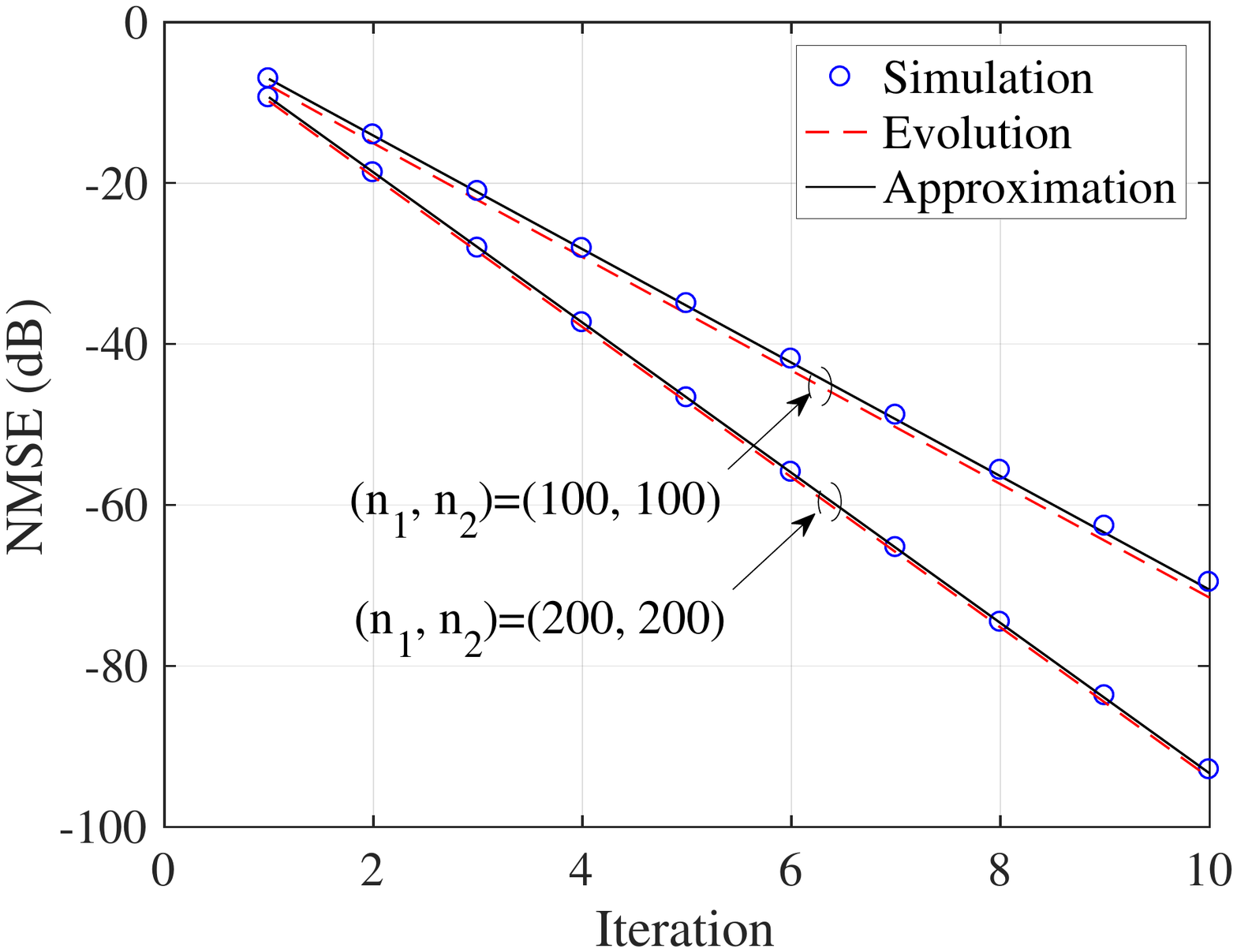}
	\caption{Left: State evolution of TARM for partial orthogonal ROIL operator. $r=40, m/n=m/(n_1 n_2)=0.35, \sigma^2=0$. The size of $\X^{\ast}$ is shown in the plot. Right: State evolution of TARM for Gaussian ROIL operator. $r=4, m/n=0.35$, $\sigma^2=0$. The size of $\X^{\ast}$ is shown in the plot.}\label{fig_state}
\end{figure}
\begin{figure}[!ht]
\centering
	\includegraphics[width=0.5\linewidth]{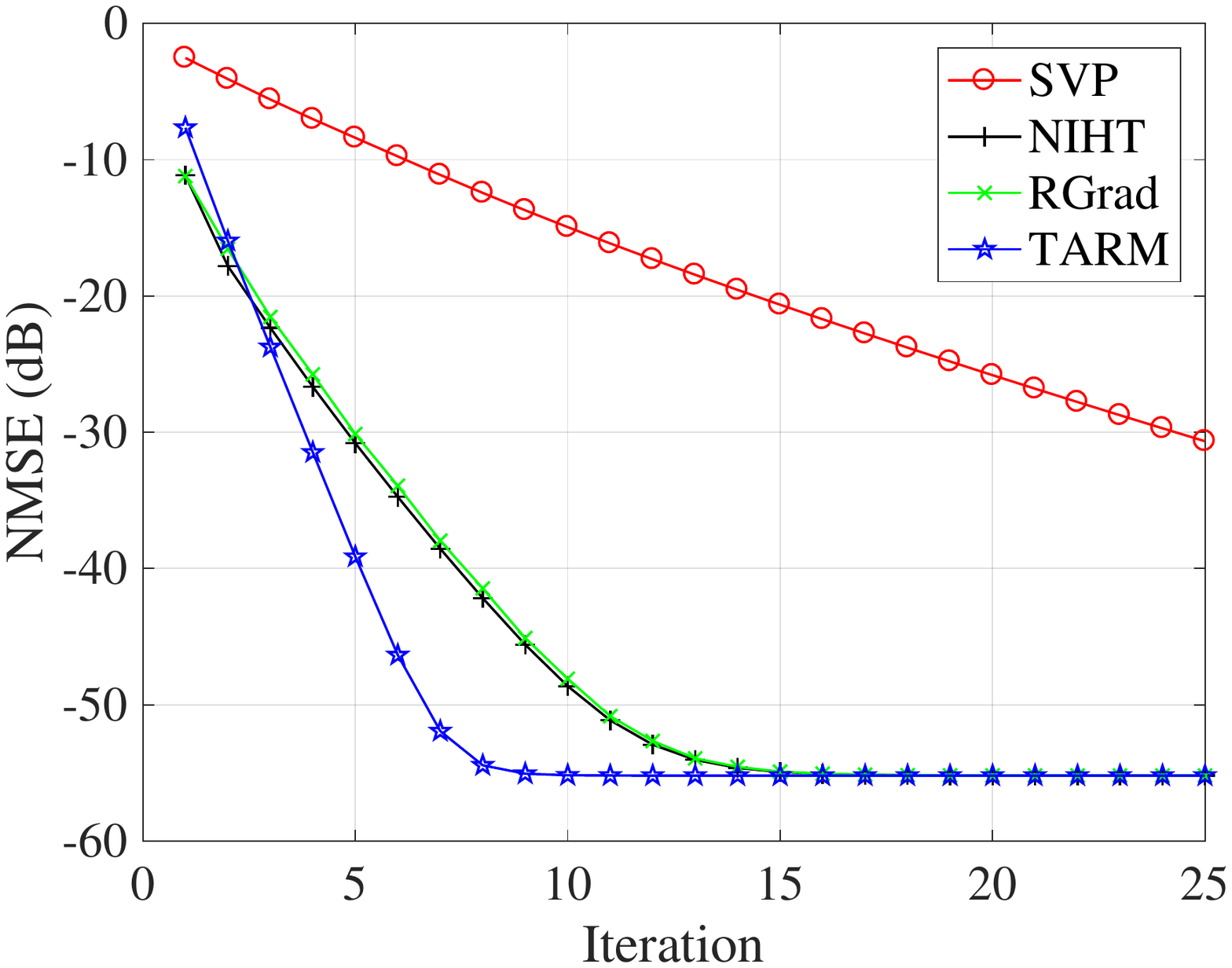}\includegraphics[width=0.5\linewidth]{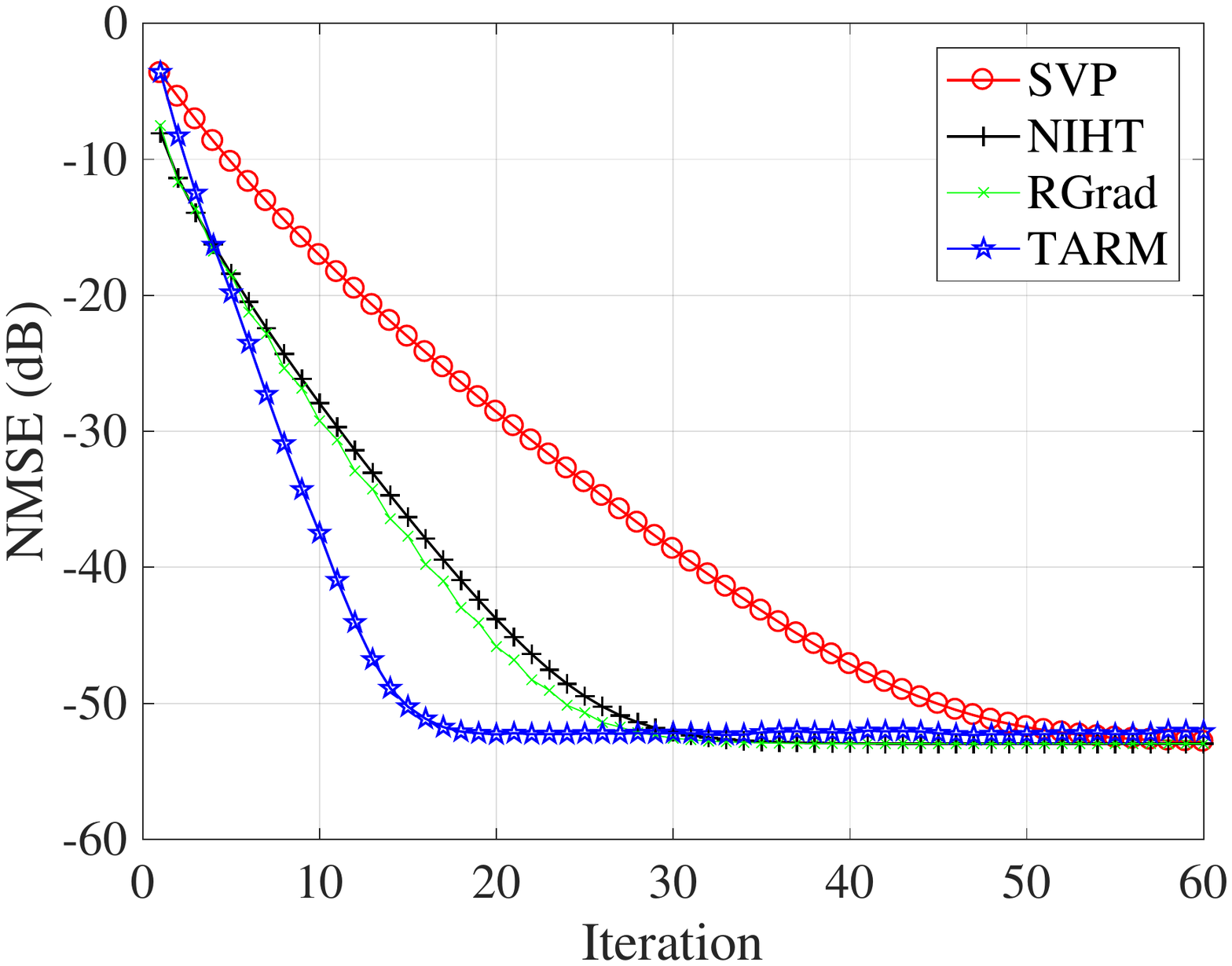}
	\includegraphics[width=0.5\linewidth]{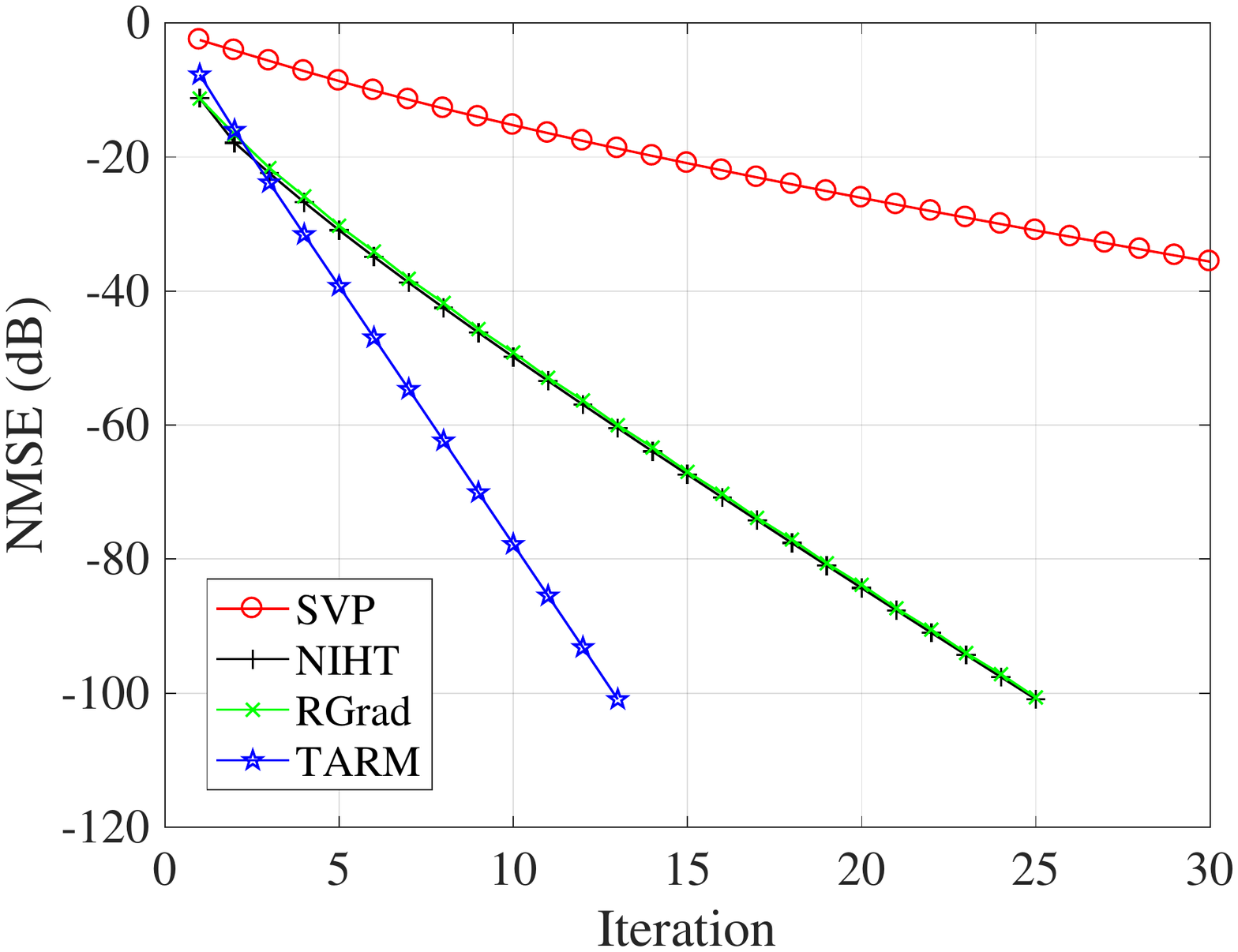}\includegraphics[width=0.5\linewidth]{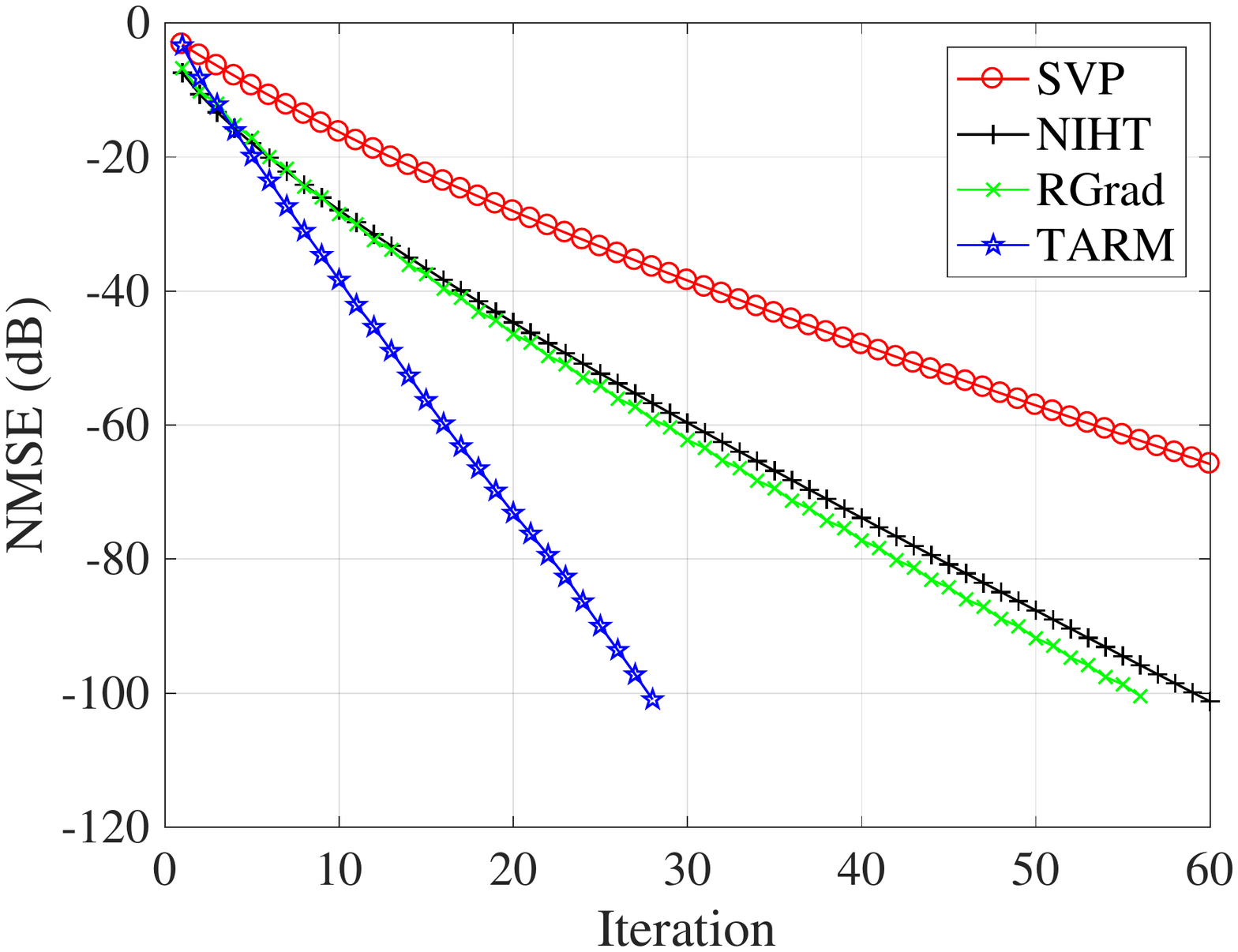}
	\caption{Comparison of algorithms. Top left: $\mathcal{A}$ is a partial orthogonal ROIL operator with $n_1=n_2=1000$, $r=50,m/n=0.39, \sigma^2=10^{-5}$. Top right: $\mathcal{A}$ is a random Gaussian ROIL operator with $n_1=n_2=80$, $r=10, p=(n_1+n_2-r)\times r,m/p=3, \sigma^2=10^{-5}$. Bottom left: $\mathcal{A}$ is a random Gaussian ROIL operator with $n_1=n_2=80$, $r=10, p=(n_1+n_2-r)\times r,m/p=3, \sigma^2=0$. Bottom right: $\mathcal{A}$ is a random Gaussian ROIL operator with $n_1=n_2=80$, $r=10, p=(n_1+n_2-r)\times r,m/p=3, \sigma^2=0$.}\label{plot_po}
\end{figure}

\subsubsection{Performance comparisons}
We compare TARM with the existing state-of-the-art algorithms for low-rank matrix recovery with partial orthogonal and Gaussian ROIL operators. The following algorithms are involved: singular value projection (SVP) \cite{jain2010guaranteed}, normalized iterative 
hard thresholding \cite{tanner2013normalized}, and  Riemannian gradient descent (RGrad) \cite{wei2016guarantees}. We compare these algorithms under the same settings for 100 times, and the final results are averaged over all the comparisons. We plot the per iteration normalized mean square error (NMSE) in Fig. \ref{plot_po}. From Fig. \ref{plot_po}, we see that TARM converges much faster than NIHT and RGrad for both Gaussian ROIL operators and partial orthogonal ROIL operators.

\subsubsection{Empirical phase transition}
The phase transition curve characterized the tradeoff between measurement rate $\delta$ and the largest rank $r$ that an algorithm succeeds in the recovery of $\X^{\ast}$. We consider an algorithm to be successful in recovering the low-rank matrix $\X^{\ast}$ when the following conditions are satisfied: 1) the normalized mean square error $\frac{\|\X^{(t)}-\X^{\ast}\|^2_F}{\|\X^{\ast}\|^2_F}\leq 10^{-6}$; 2) the iteration time $t<1000$. The dimension of the manifold of $n_1\times n_2$ matrices of rank $r$ is $r(n_1+n_2-r)$ \cite{vandereycken2013low}. Thus, for any algorithm, the minimal number of measurements for successful recovery is $r(n_1+n_2-r)$, i.e., $m\geq r(n_1+n_2-r)$. Then, an upper bound for successful recovery is $r\leq\frac{n_1+n_2-\sqrt{(n_1+n_2)^2-4m}}{2}$. In Fig. \ref{pt_orth}, we plot the phase transition curves of the algorithms mentioned before. From Fig. \ref{pt_orth}, we see that the phase transition curve of TARM is the closest to the upper bound and considerably higher than the curves of NIHT and RGrad.
\begin{figure}[!ht]
	\centering
	\includegraphics[width=\linewidth]{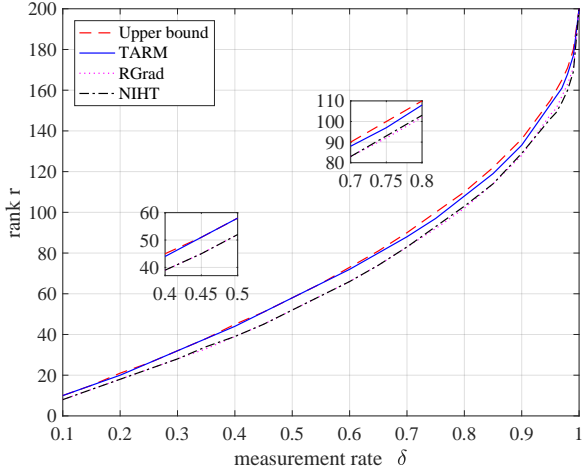}
	\caption{The phase transition curves of various low-rank matrix recovery algorithms with a partial orthogonal ROIL operator. $n_1=n_2=200$, $\sigma^2=0$. The region below each phase transition curve corresponds to the situation that the corresponding algorithm successfully recovers $\X^{\ast}$.}\label{pt_orth}
\end{figure}

\section{Matrix Completion}
In this section, we consider TARM for the matrix completion problem, where the linear operator $\mathcal{A}$ is a selector which selects a subset of the elements of the low-rank matrix $\X^{\ast}$. With such a choice of $\mathcal{A}$, the two assumptions in Section III for low-rank matrix recovery do not hold any more; see, e.g., Fig. \ref{qqplot_mc}. Thus, $\mu_{t}$ given in (\ref{mu}) and $\alpha_{t}$ in (\ref{alpha_t1}) cannot be used for matrix completion. We next discuss how to design $\mu_t$ and $\alpha_t$ for matrix completion.

\subsection{Determining $\mu_{t}$}
The TARM algorithm is similar to SVP and NIHT as aforementioned. These three algorithms are all SVD based and a gradient descent step is involved at each iteration. The choice of descent step size $\mu_t$ is of key importance. In \cite{tanner2013normalized, wei2016guarantees}, $\mu_t$ are chosen adaptively based on the idea of the steepest descent. Due to the similarity between TARM and NIHT, we follow the methods in \cite{tanner2013normalized, wei2016guarantees} and choose $\mu_t$ as
\begin{align}\label{mu_mc}
	\mu_{t}&=\frac{\|\mathcal{P}_{\mathcal{S}}^{(t)}(\mathcal{A}^T(\y-\mathcal{A}(\X^{(t)})))\|_F^2}{\|\mathcal{A}(\mathcal{P}_{\mathcal{S}}^{(t)}(\mathcal{A}^T(\y-\mathcal{A}(\X^{(t)}))))\|_2^2}
\end{align}
where $\mathcal{P}^{(t)}_{\mathcal{S}}:\mathbb{R}^{n_1\times n_2}\rightarrow \mathcal{S}$ denotes a projection operator with $\mathcal{S}$ being a predetermined subspace of $\R^{n_1\times n_2}$. The subspace $\mathcal{S}$ can be chosen as the left singular vector space of $\X^{(t)}$, the right singular vector space of $\X^{(t)}$, or the tangent space of $C(\X)=\frac{1}{2}\|\y-\mathcal{A}(\X)\|_F^2$ at $\X=\X^{(t)}$. Let the SVD of $\X^{(t)}$ be $\X^{(t)}=\U^{(t)}\bm{\Sigma}^{(t)}(\V^{(t)})^T$. Then, the corresponding three projection operators are given respectively by
\BS\label{muformc}
\begin{align}
	\mathcal{P}^{(t)}_{\mathcal{S}_1}(\X)&=\U^{(t)}(\U^{(t)})^T\X\\
	\mathcal{P}^{(t)}_{\mathcal{S}_2}(\X)&=\X\V^{(t)}(\V^{(t)})^T\\
	\mathcal{P}^{(t)}_{\mathcal{S}_3}(\X)&=\U^{(t)}(\U^{(t)})^T\X+\X\V^{(t)}(\V^{(t)})^T\notag\\
	 &\ \ -\U^{(t)}(\U^{(t)})^T\X\V^{(t)}(\V^{(t)})^T.
\end{align}\ES
By combining (\ref{muformc}) with (\ref{mu_mc}), we obtain three different choices of $\mu_t$. Later, we present numerical results to compare the impact of different choices of $\mu_t$ on the performance of TARM.

\subsection{Determining $\alpha_{t}$ and $c_t$}
The linear combination parameters $\alpha_{t}$ and  $c_t$ in TARM is difficult to evaluate since Assumptions 1 and 2 do not hold for TARM in the matrix completion problem. Recall that $c_t$ is determined by $\alpha_t$ through (\ref{c_t}). So, we only need to determine $\alpha_t$. In the following, we propose three different approaches to evaluate $\alpha_t$.

The first approach is to choose $\alpha_t$ as in (\ref{alpha_t1}):
\begin{align}
	\alpha_t=\frac{\mr{div}(\mathcal{D}(\R^{(t)}))}{n}.
\end{align}
We use the Monte Carlo method to compute the divergence. Specifically, the divergence of $\mathcal{D}(\R^{(t)})$ can be estimated by \cite{metzler2016denoising}
\begin{align}\label{divergencec}
	\mr{div}(\mathcal{D}(\R^{(t)}))=\E_{\N}\left[\left<\frac{\mathcal{D}(\R^{(t)}+\epsilon \N)-\mathcal{D}(\R^{(t)})}{\epsilon},\N\right>\right]
\end{align}
where $\N\in \mathbb{R}^{n_1\times n_2}$ is a random Gaussian matrix with zero mean and unit variance entries, and $\epsilon$ is a small real number. The expectation in (\ref{divergencec}) can be approximated by sample mean. When the size of $\R^{(t)}$ is large, one sample is good enough for approximation. 
 
 We now describe the second approach. Recall that we choose $c_t$ according to (\ref{parameters}c) to satisfy Condition 2: $\left<\R^{(t)}-\X^{\ast},\X^{(t)}-\X^{\ast}\right>=0$. Since $\X^{\ast}$ is unknown, finding $\alpha_t$ to satisfy Condition 2 is difficult. Instead, we try to find $\alpha_t$ that minimizes the transformed correlation of the two estimation errors:
 \BS\label{esterr}
 \begin{align}
 	&\left|\left<\mathcal{A}(\R^{(t)}-\X^{\ast}),\mathcal{A}(\X^{(t)}-\X^{\ast})\right>\right|\\
 	=&\left|\left<\mathcal{A}(\R^{(t)})-\y,\mathcal{A}(\X^{(t)})-\y\right>\right|\\
 	=&\left|\left<\mathcal{A}(c_t(\Z^{(t)}-\alpha_t \R^{(t)}))-\y,\mathcal{A}(\R^{(t)})-\y\right>\right|\\
 	=&\left|\left<\frac{\left<\Z^{(t)}-\alpha_{t}\R^{(t)},\R^{(t)}\right>}{\|\Z^{(t)}-\alpha_{t}\R^{(t)}\|_F^2}\mathcal{A}(\Z^{(t)}-\alpha_t \R^{(t)})-\y,\mathcal{A}(\R^{(t)})-\y\right>\right|.
 \end{align}\ES
The minimization of (\ref{esterr}d) over $\alpha_t$ can be done by an exhaustive search over a small neighbourhood of zero.

The third approach is to set $\alpha_t$ as the asymptotic limit given in (\ref{ac}a). We next provide numerical simulations to show the impact of the above three different choices of $\alpha_t$ on the performance of TARM.

\subsection{Numerical Results}
We compare the TARM algorithms with different choices of $\mu_t$ and $\alpha_t$. We also compare TARM with the existing matrix completion algorithms, including SVP \cite{jain2010guaranteed}, NIHT \cite{tanner2013normalized}, and RGrad \cite{wei2016guarantees}. The matrix form $\A\in \mathbb{R}^{m\times n}$ of the matrix completion operator $\mathcal{A}$ is chosen as a random selection matrix (with randomly selected rows from a permutation matrix). The low-rank matrix $\X^{\ast}\in\mathbb{R}^{n_1\times n_2}$ is generated by the multiplication of two random Gaussian matrices of size $n_1\times r$ and $r \times n_2$.

\begin{figure}[!ht]
\centering
	\includegraphics[width=0.95\linewidth]{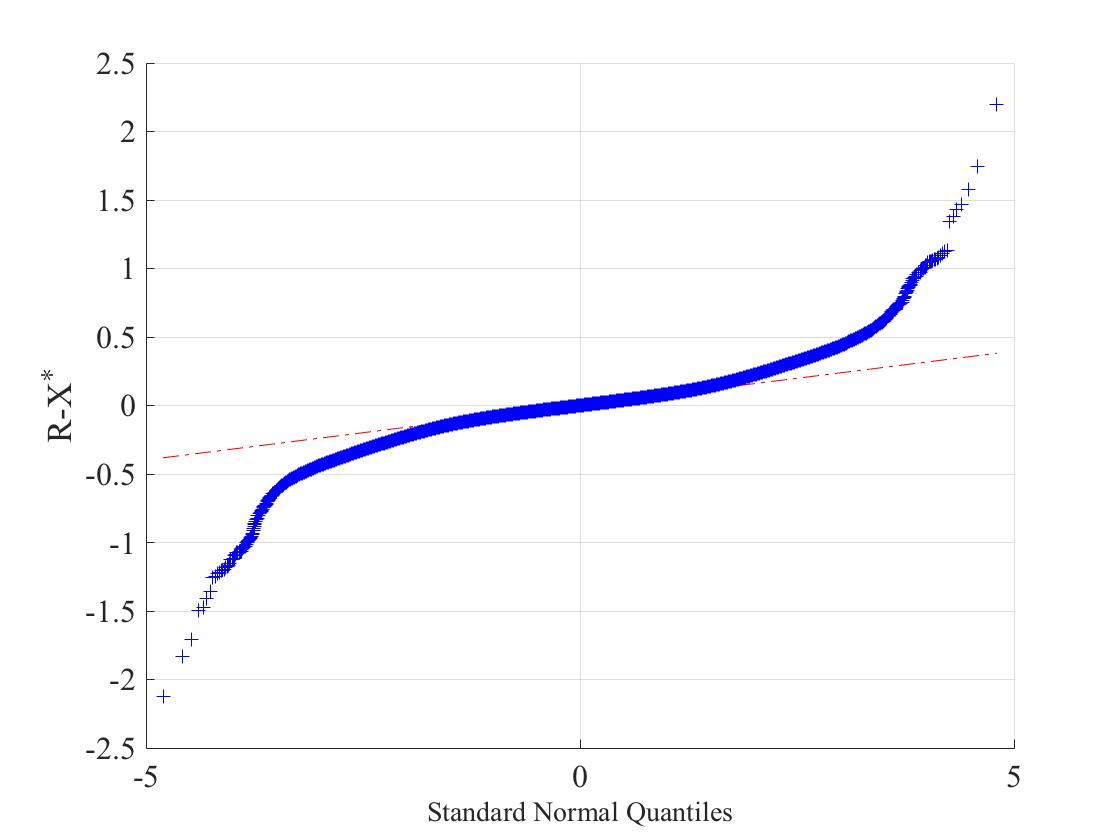}
	\caption{The QQplots of the output error of Module A in the 5th iteration of TARM for matrix completion. Simulation settings: $n_1=800,n_2=800, r=50, \frac{m}{n_1 n_2}=0.3, \sigma^2=0$.}\label{qqplot_mc}
\end{figure}

\begin{figure}[!ht]
\centering
	\includegraphics[width=0.95\linewidth]{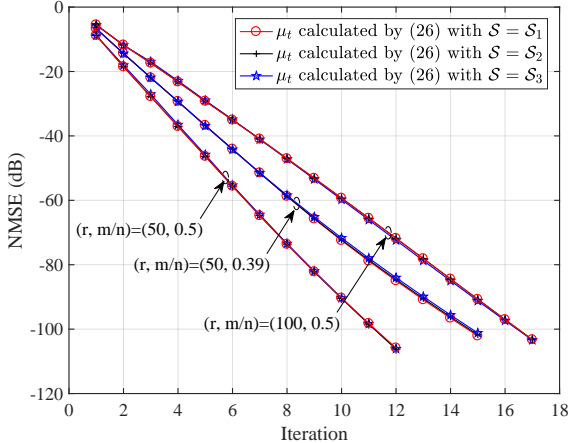}
	\caption{Comparison of the TARM algorithms for matrix completion with different choices of $\mu_t$. $n_1=n_2=1000,\sigma^2=0$.}\label{compare_mu}
\end{figure}

\begin{figure}[!ht]
\centering
	\includegraphics[width=0.95\linewidth]{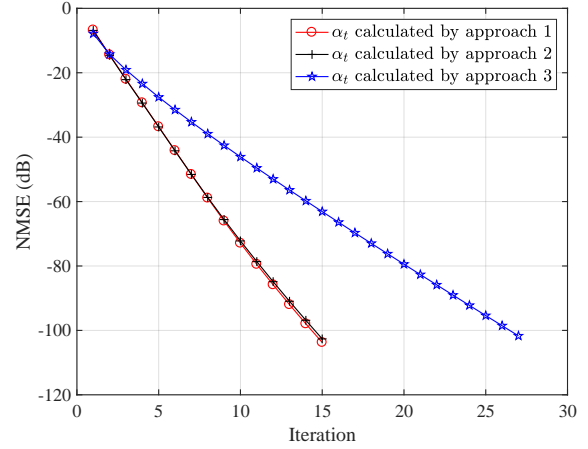}
	\caption{Comparison of the TARM algorithms for matrix completion with different choices of $\alpha_t$. $n_1=n_2=1000, r=50, \frac{m}{n_1 n_2}=0.39,\sigma^2=0$.}\label{compare_alpha}
\end{figure}
\subsubsection{Non-Gaussianity of the output error of Module A}In Fig. \ref{qqplot_mc}, we plot the QQplot of the input estimation errors of Module A of TARM for matrix completion. The QQplot shows that the distribution of the estimation errors of Module A is non-Gaussian. Thus, Assumption 2 does not hold for matrix completion.

\subsubsection{Comparisons of different choices of $\mu_t$}
We compare the TARM algorithms with $\mu_t$ in (\ref{mu_mc}) and the subspace $\mathcal{S}$ given by (\ref{muformc}), as shown in Fig. \ref{compare_mu}. We see that the performance of TARM is not sensitive to the three choices of $\mathcal{S}$ in (\ref{muformc}). In the following, we always choose $\mu_t$ with $\mathcal{S}$ given by (\ref{muformc}a).

\subsubsection{Comparisons of different choices of $\alpha_t$}
We compare the TARM algorithms with $\alpha_t$ given by the three different approaches in Subsection B. As shown in Fig. \ref{compare_alpha}, the first approach has the best performance among the three; the second approach performance close to the first one; the third approach performs considerably worse than the first two. Note that the second approach involves exhaustive search over $\alpha_t$, which is computationally involving. Thus, we henceforth choose $\alpha_t$ based on the first approach.

\subsubsection{Performance comparisons}
We compare TARM with the state-of-the-art algorithms for matrix completion. All the algorithms are run under the same settings for 100 random realizations. The numerical results are shown in Fig. \ref{plot_mc}. We see that TARM converges much faster than all the other algorithms.

\begin{figure}[!ht]
\centering
	\includegraphics[width=0.5\linewidth]{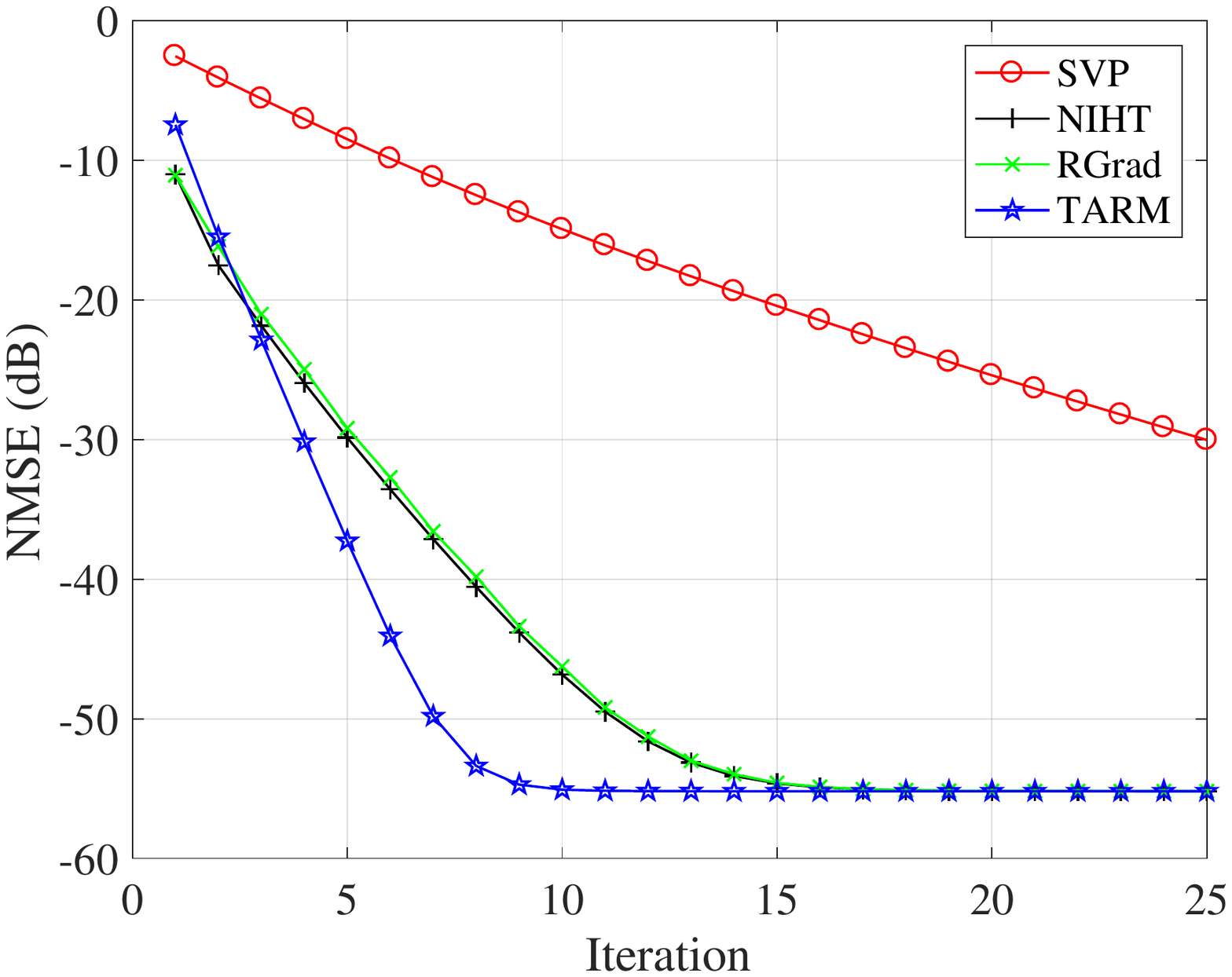}\includegraphics[width=0.5\linewidth]{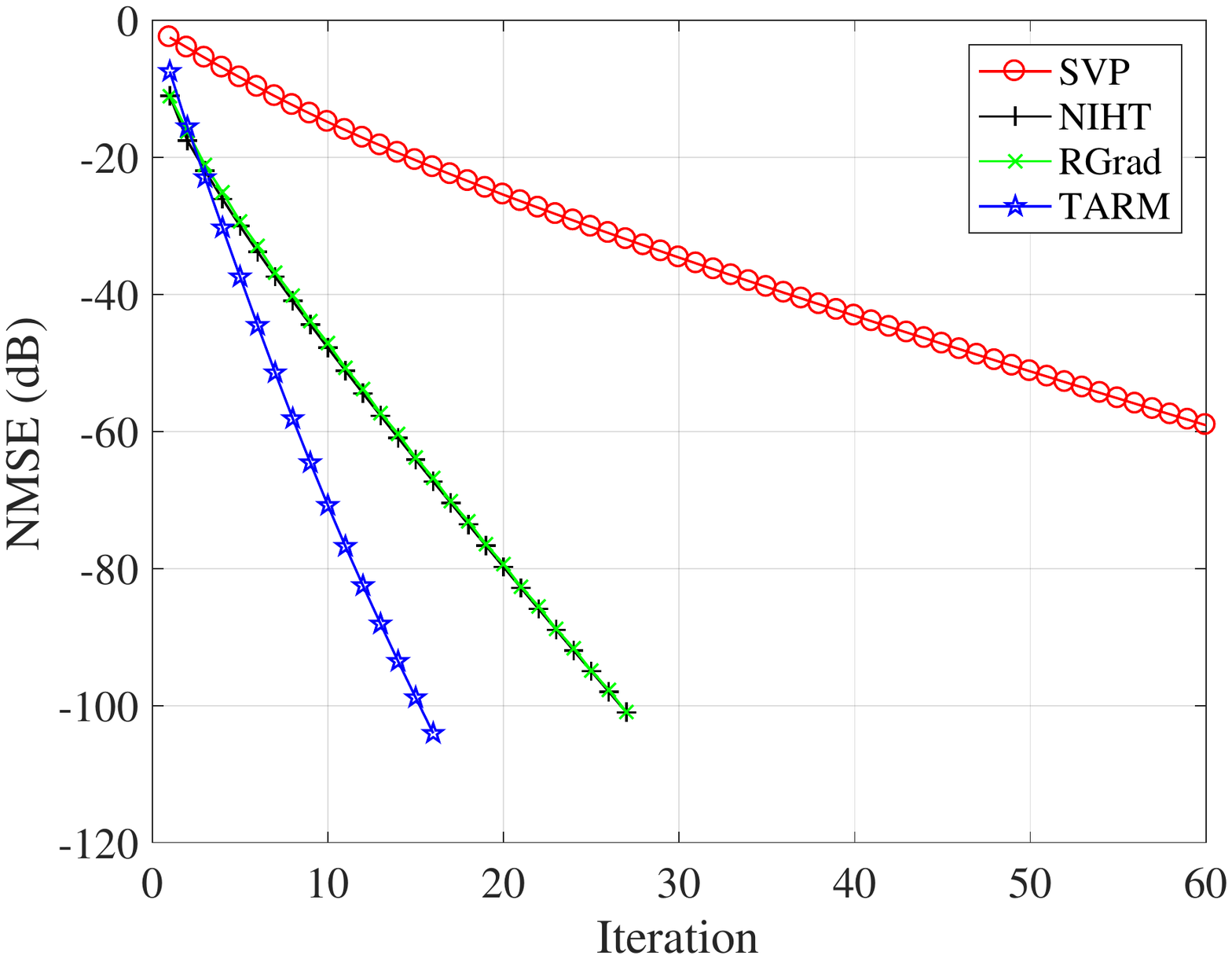}
	\caption{Comparison of algorithms for matrix completion. Left: $n_1=n_2=1000$, $r=50, m/n=0.39, \sigma^2=10^{-5}$. Right: $n_1=n_2=1000$, $r=50, m/n=0.39, \sigma^2=0$.}\label{plot_mc}
\end{figure}

\subsubsection{Empirical phase transition}
Similar to the case of low-rank matrix recovery. We consider an algorithm to be successful in recovering the low-rank matrix $\X^{\ast}$ when the following conditions are satisfied: 1) the normalized mean square error $\frac{\|\X^{(t)}-\X^{\ast}\|^2_F}{\|\X^{\ast}\|^2_F}\leq 10^{-6}$; 2) the iteration time $t<1000$. In Fig. \ref{pt_orth}, we plot the phase transition curves of the algorithms mentioned before. From Fig. \ref{plot_pr}, we see that the phase transition of TARM is the closest to the upper bound and considerably higher than the curves of NIHT and RGrad.

\begin{figure}[!h]
\centering
	\includegraphics[width=\linewidth]{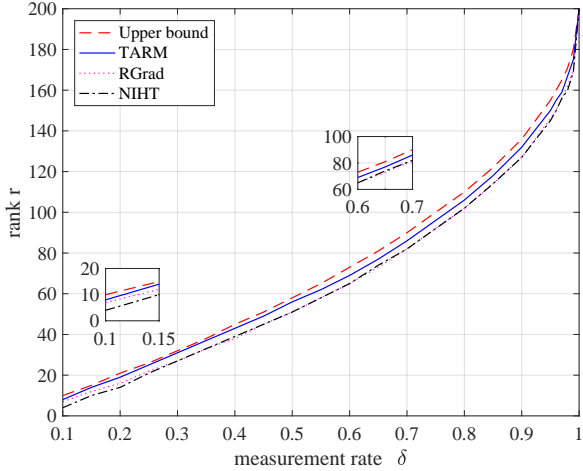}
	\caption{The phase transition curves of various matrix completion algorithms. $n_1=n_2=200, \sigma^2=0$. For each algorithm, the region below the phase transition curve corresponds to the successful recovery of $\X^{\ast}$.}\label{plot_pr}
\end{figure}

\section{Conclusions}
In this paper, we proposed a low-complexity iterative algorithm termed TARM for solving the stable ARM problem. The proposed algorithm can be applied to both low-rank matrix recovery and matrix completion. For low-rank matrix recovery, the performance of TARM can be accurately characterized by the state evolution technique when ROIL operators are involved. For matrix completion, we showed that, although state evolution is not accurate, the parameters of TARM can be carefully tuned to achieve good performance. Numerical results demonstrate that TARM has competitive performance compared to other existing algorithms for both low-rank matrix recovery and matrix completion.

\appendices
\section{Proof of Lemma 1}

We first determine $\mu_t$. We have
\BS
\begin{align}
&\left<\R^{(t)}-\X^{\ast},\X^{(t-1)}-\X^{\ast}\right>\notag\\
=&\left<\X^{(t\!-\!1)}\!\!+\!\!\mu_t\mathcal{A}^{T}(\y\!\!-\!\!\mathcal{A}(\X^{(t\!-\!1)}))\!\!-\!\!\X^{\ast},\X^{(t\!-\!1)}\!-\!\X^{\ast}\right>\\
=&\left<\X^{(t-1)}\!\!+\!\!\mu_t\mathcal{A}^{T}(\mathcal{A}(\X^{\ast})\!+\!\n-\!\!\mathcal{A}(\X^{(t-1)}))\!-\!\X^{\ast},\X^{(t-1)}\!-\!\X^{\ast}\right>\\
=&\left<\X^{(t-1)}-\X^{\ast},\X^{(t-1)}-\X^{\ast}\right>\\
&-\mu_{t}\left<\mathcal{A}^{T}(\mathcal{A}(\X^{(t-1)}-\X^{\ast})),\X^{(t-1)}-\X^{\ast}\right>\notag
	\\&+\mu_{t}\left<\mathcal{A}^{T}(\n),\X^{(t-1)}-\X^{\ast}\right>\notag\\
	=&\left<\X^{(t-1)}-\X^{\ast},\X^{(t-1)}-\X^{\ast}\right>\\
	&-\mu_{t}\left<\mathcal{A}(\X^{(t-1)}-\X^{\ast}),\mathcal{A}(\X^{(t-1)}-\X^{\ast})\right>\notag\\
	&+\mu_{t}\left<\n,\mathcal{A}(\X^{(t-1)}-\X^{\ast})\right>\notag
\end{align}\label{apda}\ES
where step (\ref{apda}a) follows by substituting $\R^{(t)}$ in Line 3 of Algorithm \ref{algorithm1}, and step (\ref{apda}d) follows by noting
\begin{align}
	\left<\mathcal{A}(\bm{B}),\bm{c}\right>=\left<\bm{B},\mathcal{A}^T(\bm{c})\right>
\end{align}
for any matrix $\bm{B}$ and vector $\bm{c}$ of appropriate sizes. Together with Condition 1, we obtain (\ref{parameters}a).

We next determine $\alpha_{t}$ and $c_{t}$. First note
\BS
\begin{align}
	\|\X^{(t)}-\R^{(t)}\|_F^2&=\|\X^{(t)}-\X^{\ast}\|_F^2+\|\X^{\ast}-\R^{(t)}\|_F^2\notag\\
	&\ \ \ +2\left<\X^{(t)}-\X^{\ast},\X^{\ast}-\R^{(t)}\right>\\
	&=\|\X^{(t)}\!-\!\X^{\ast}\|_F^2+\|\X^{\ast}\!-\!\R^{(t)}\|_F^2
\end{align}\label{appda2}\ES
where (\ref{appda2}b) is from Condition 2 in (\ref{cond2}). Recall that in the $t$-th iteration $\R^{(t)}$ is a function of $\mu_t$ but not of $\alpha_t$ and $c_t$. Thus, minimizing $\|\X^{(t)}-\X^{\ast}\|_F^2$ over $\alpha_t$ and $c_t$ is equivalent to minimizing $\|\X^{(t)}-\R^{(t)}\|_F^2$ over $\alpha_t$ and $c_t$. For any given $\alpha_t$, the optimal $c_t$ to minimize $\|\X^{(t)}-\R^{(t)}\|_F^2=\|c_t(\Z^{(t)}-\alpha_{t}\R^{(t)})-\R^{(t)}\|_F^2$ is given by
\begin{align}
	c_{t}=\frac{\left<\Z^{(t)}-\alpha_{t}\R^{(t)},\R^{(t)}\right>}{\|\Z^{(t)}-\alpha_{t}\R^{(t)}\|_F^2}.\label{cpara}
\end{align}
Then,
\BS
\begin{align}
	&\left<\X^{(t)}-\X^{\ast},\R^{(t)}-\X^{\ast}\right>\notag\\
	=&\left<c_t(\Z^{(t)}-\alpha_{t}\R^{(t)})-\X^{\ast},\R^{(t)}-\X^{\ast}\right>\\
	=&\left<\frac{\left<\Z^{(t)}-\alpha_{t}\R^{(t)},\R^{(t)}\right>}{\|\Z^{(t)}-\alpha_{t}\R^{(t)}\|_F^2}(\Z^{(t)}-\alpha_{t}\R^{(t)})-\X^{\ast},\R^{(t)}-\X^{\ast}\right>
\end{align}\label{alphaeqn}\ES
where (\ref{alphaeqn}a) follows by substituting $\X^{(t)}$ in Line 5 of Algorithm \ref{algorithm1}, and (\ref{alphaeqn}b) by substituting $c_t$ in (\ref{cpara}). Combining (\ref{alphaeqn}) and Condition 2, we see that
$\alpha_t$ is the solution of the following quadratic equation:
\begin{align}
	a_t\alpha_t^2+b_t\alpha_t+ d_t=0
\end{align}
where $a_t, b_t$, and $d_t$ are defined in (\ref{abd}).
Therefore, $\alpha_t$ is given by (\ref{parameters}a). With the above choice of $c_t$, we have
\begin{align}\label{orthogonal2}
	&\left<\X^{(t)}-\R^{(t)},\X^{(t)}\right>\notag\\
	=&\left<c_t(\Z^{(t)}\!-\!\alpha_{t}\R^{(t)})-\R^{(t)},c_t(\Z^{(t)}\!-\!\alpha_{t}\R^{(t)})\right>=0.
\end{align}
This orthogonality is useful in analyzing the performance of Module B.

\section{Convergence Analysis of TARM Based on RIP}
Without loss of generality, we assume $n_1\leq n_2$ in this appendix. Following the convention in \cite{tanner2013normalized}, we focus our discussion on the noiseless case, i.e., $\n=\bm{0}$.
\begin{definition} (Restricted Isometry Property).
	Given a linear operator $\mathcal{A}: \mathbb{R}^{n_1\times n_2}\rightarrow \mathbb{R}^m$, a minimum constant called the rank restricted isometry constant (RIC) $\delta_{r}(\mathcal{A})\in (0,1)$ exists such that 
\begin{align}
	(1\!-\!\delta_{r}(\mathcal{A}))\|\X\|_F^2\leq \|\gamma\mathcal{A}(\X)\|_2^2\leq (1\!+\!\delta_{r}(\mathcal{A}))\|\X\|_F^2
\end{align}
for all $\X\in \mathbb{R}^{n_1\times n_2}$ with $\mr{rank}(\X)\leq r$, where $\gamma>0$ is a constant scaling factor.
\end{definition}

We now introduce two useful lemmas.
\begin{lemma}\label{lemma3}
 	Assume that $\alpha_{t+1}$ and $c_{t+1}$ satisfy Condition 2 and Condition 3. Then,
 	\begin{align}
 		\|\X^{(t)}-\R^{(t)}\|_F^2=\frac{\|\R^{(t)}-\Z^{(t)}\|_F^2}{\frac{\|\R^{(t)}-\Z^{(t)}\|_F^2}{\|\Z^{(t)}\|_F^2}\alpha_t^2+(1-\alpha_t)^2}.
 	\end{align}
 \end{lemma}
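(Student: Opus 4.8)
The plan is to reduce the claimed identity to routine algebra by combining the closed-form coefficient $c_t$ with the orthogonality structure of the best rank-$r$ approximation denoiser. The single nontrivial ingredient is the observation that, since $\Z^{(t)}=\mathcal{D}(\R^{(t)})$ keeps only the top $r$ singular triplets of $\R^{(t)}$, the residual $\R^{(t)}-\Z^{(t)}$ is spanned by the discarded triplets and is therefore Frobenius-orthogonal to $\Z^{(t)}$. I would first record this explicitly by writing $\R^{(t)}=\sum_i\sigma_i\bm{u}_i\bm{v}_i^T$ and noting $\left<\R^{(t)}-\Z^{(t)},\Z^{(t)}\right>=0$, which yields the two consequences $\left<\R^{(t)},\Z^{(t)}\right>=\|\Z^{(t)}\|_F^2$ and $\|\R^{(t)}-\Z^{(t)}\|_F^2=\|\R^{(t)}\|_F^2-\|\Z^{(t)}\|_F^2$.

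Next I would take as the starting point the expression already derived in (\ref{2alpha}c), namely
\begin{align}
\|\X^{(t)}-\R^{(t)}\|_F^2=\|\R^{(t)}\|_F^2-\frac{\left<\Z^{(t)}-\alpha_t\R^{(t)},\R^{(t)}\right>^2}{\|\Z^{(t)}-\alpha_t\R^{(t)}\|_F^2},
\end{align}
which is valid precisely because Condition 3 forces $c_t$ to equal the minimizing value in (\ref{c_t}). Substituting the orthogonality relations, the inner product in the numerator simplifies to $\|\Z^{(t)}\|_F^2-\alpha_t\|\R^{(t)}\|_F^2$, and the squared norm in the denominator expands to $(1-2\alpha_t)\|\Z^{(t)}\|_F^2+\alpha_t^2\|\R^{(t)}\|_F^2$.

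Putting both terms over this common denominator, I would then check that the $\alpha_t^2$ and the cross terms cancel, leaving the numerator equal to $\|\Z^{(t)}\|_F^2\big(\|\R^{(t)}\|_F^2-\|\Z^{(t)}\|_F^2\big)=\|\Z^{(t)}\|_F^2\,\|\R^{(t)}-\Z^{(t)}\|_F^2$. Dividing numerator and denominator by $\|\Z^{(t)}\|_F^2$ and invoking $\|\R^{(t)}\|_F^2-\|\Z^{(t)}\|_F^2=\|\R^{(t)}-\Z^{(t)}\|_F^2$ once more turns the denominator into $\frac{\|\R^{(t)}-\Z^{(t)}\|_F^2}{\|\Z^{(t)}\|_F^2}\alpha_t^2+(1-\alpha_t)^2$, which is exactly the claimed expression.

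There is no genuine obstacle here beyond bookkeeping: once the SVD orthogonality $\left<\R^{(t)}-\Z^{(t)},\Z^{(t)}\right>=0$ is in place, the remainder is a one-line cancellation. The only point deserving a word of care is that the denominator $\|\Z^{(t)}-\alpha_t\R^{(t)}\|_F^2$ must be nonzero for the division to be legitimate, which holds whenever $\Z^{(t)}\neq\alpha_t\R^{(t)}$, i.e. generically as long as $\R^{(t)}$ has nonzero leading singular values.
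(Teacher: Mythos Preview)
Your proposal is correct and follows essentially the same route as the paper's proof in Appendix~E: both rely on the SVD orthogonality $\langle \R^{(t)}-\Z^{(t)},\Z^{(t)}\rangle=0$ together with the identity $\|\X^{(t)}-\R^{(t)}\|_F^2=\|\R^{(t)}\|_F^2-\frac{\langle \Z^{(t)}-\alpha_t\R^{(t)},\R^{(t)}\rangle^2}{\|\Z^{(t)}-\alpha_t\R^{(t)}\|_F^2}$, and then carry out the same algebraic simplification. The only cosmetic difference is that you cite (\ref{2alpha}c) directly, whereas the paper re-derives it via the orthogonality $\langle \R^{(t)}-\X^{(t)},\X^{(t)}\rangle=0$ established in (\ref{orthogonal2}).
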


 \begin{lemma}
 	Let $\Z^{(t)}$ be the best rank-$r$ approximation of $\R^{(t)}$. Then,
 	\begin{align}\label{lemma4}
 		\|\R^{(t)}-\Z^{(t)}\|_F^2\leq\|\X^{\ast}-\R^{(t)}\|_F^2.
 	\end{align}
 \end{lemma}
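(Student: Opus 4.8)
The plan is to read off the result directly from the optimality of the best rank-$r$ approximation (the Eckart--Young theorem), by recognizing $\X^{\ast}$ as one admissible competitor in the minimization that defines $\Z^{(t)}$.

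First I would recall the defining property of $\Z^{(t)}$. By Line 4 of Algorithm~\ref{algorithm1} and the definition of $\mathcal{D}(\cdot)$ in (6), $\Z^{(t)}=\mathcal{D}(\R^{(t)})$ is the truncated SVD of $\R^{(t)}$ retaining its $r$ largest singular values, which by Eckart--Young is the global minimizer
\begin{align}
\Z^{(t)}=\argmin_{\mr{rank}(\bm{M})\leq r}\|\R^{(t)}-\bm{M}\|_F^2.
\end{align}
Next I would observe that $\X^{\ast}$ has rank $r$ by the problem setup, hence $\X^{\ast}$ belongs to the feasible set $\{\bm{M}:\mr{rank}(\bm{M})\leq r\}$. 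Since $\Z^{(t)}$ attains the minimum over this set, its objective value cannot exceed that of the particular feasible point $\X^{\ast}$, giving
\begin{align}
\|\R^{(t)}-\Z^{(t)}\|_F^2\leq\|\R^{(t)}-\X^{\ast}\|_F^2=\|\X^{\ast}-\R^{(t)}\|_F^2,
\end{align}
which is exactly (\ref{lemma4}).

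There is no real obstacle in this argument: it is a one-line consequence of Eckart--Young and does not invoke Assumptions~1--2 or any RIP condition. The only point that needs checking is that $\X^{\ast}$ is genuinely of rank $r$, so that it is admissible in the minimization, and this holds by hypothesis. I would also note that uniqueness of the best rank-$r$ approximation is not required: even when the $r$-th and $(r+1)$-th singular values of $\R^{(t)}$ coincide, $\Z^{(t)}$ still attains the minimal Frobenius distance, so the inequality is unaffected.
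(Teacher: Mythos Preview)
Your proposal is correct and is essentially identical to the paper's own treatment: the paper states that Lemma~4 ``is straightforward from the definition of the best rank-$r$ approximation'' and cites Eckart--Young, which is precisely the optimality-plus-feasibility argument you gave.
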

 The proof of Lemma 3 is given in Appendix C. Lemma 4 is straightforward from the definition of the best rank-$r$ approximation \cite[p. 211-218]{eckart1936approximation}.

\begin{theorem}
 	Assume that $\mu_t, \alpha_t, c_t$ satisfy Conditions 1-3, and the linear operator $\mathcal{A}$ satisfies the RIP with rank $n_1$ and RIC $\delta_{n_1}$. Then,
 \begin{align}
 		\|\X^{(t)}\!-\!\X^{\ast}\|_F^2\!\leq\!\left(\!\frac{1}{(1\!-\!\alpha_{t})^2}\!-\!1\!\right)\!\!\left(\!\frac{1\!+\!\delta_{n_1}}{1\!-\!\delta_{n_1}}\!-\!1\!\right)^2\!\! \|\X^{(t-1)}\!-\!\X^{\ast}\|_F^2\label{theorem_iter}
 	\end{align}
TARM guarantees to converge when RIC satisfies $\alpha_t\neq 1, \forall t$, and
  \begin{align}
	\delta_{n_1}<\frac{1}{1+2\sqrt{\frac{1}{\xi}\left(\frac{1}{(1-\alpha_{max})^2}-1\right)}}
\end{align}
where the constant $\xi$ satisfies $0<\xi <1$, and $\alpha_{max}=\sup
   \{\alpha_t\}$.
\end{theorem}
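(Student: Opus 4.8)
The plan is to prove the one-step contraction (\ref{theorem_iter}) by bounding the denoiser (Module B) and the gradient step (Module A) separately and then composing them; the threshold on $\delta_{n_1}$ then follows by forcing the resulting per-iteration factor below a constant $\xi\in(0,1)$. I would first bound the Module-B output error $\|\X^{(t)}-\X^{\ast}\|_F^2$ in terms of the gradient-step residual $\|\R^{(t)}-\X^{\ast}\|_F^2$. Starting from the Condition-2 identity (\ref{appda2}b), $\|\X^{(t)}-\R^{(t)}\|_F^2=\|\X^{(t)}-\X^{\ast}\|_F^2+\|\X^{\ast}-\R^{(t)}\|_F^2$, I substitute the closed form of $\|\X^{(t)}-\R^{(t)}\|_F^2$ from Lemma \ref{lemma3}. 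Discarding the nonnegative term $\frac{\|\R^{(t)}-\Z^{(t)}\|_F^2}{\|\Z^{(t)}\|_F^2}\alpha_t^2$ in its denominator enlarges the fraction, so $\|\X^{(t)}-\R^{(t)}\|_F^2\le\|\R^{(t)}-\Z^{(t)}\|_F^2/(1-\alpha_t)^2$. Rearranging, and then using the best-rank-$r$ inequality (\ref{lemma4}) to replace the numerator $\|\R^{(t)}-\Z^{(t)}\|_F^2$ by the larger $\|\R^{(t)}-\X^{\ast}\|_F^2$, collapses everything to $\|\X^{(t)}-\X^{\ast}\|_F^2\le\big(\frac{1}{(1-\alpha_t)^2}-1\big)\|\R^{(t)}-\X^{\ast}\|_F^2$. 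This needs only $\alpha_t\neq1$ so the factor is finite, and $\alpha_t\in[0,1)$ for the monotonicity used later.

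Next I would bound $\|\R^{(t)}-\X^{\ast}\|_F^2$ by $\big(\frac{1+\delta_{n_1}}{1-\delta_{n_1}}-1\big)^2\|\X^{(t-1)}-\X^{\ast}\|_F^2$. In the noiseless case Line 3 of Algorithm \ref{algorithm1} rewrites the residual as $\R^{(t)}-\X^{\ast}=(\I-\mu_t\mathcal{A}^T\mathcal{A})(\X^{(t-1)}-\X^{\ast})$, so it suffices to bound $\|\I-\mu_t\mathcal{A}^T\mathcal{A}\|_{\mathrm{op}}$. Since $n_1\le n_2$ every matrix has rank at most $n_1$, so the rank-$n_1$ RIP controls $\mathcal{A}^T\mathcal{A}$ on the whole space: the symmetric operator $\gamma^2\mathcal{A}^T\mathcal{A}$ has all eigenvalues in $[1-\delta_{n_1},1+\delta_{n_1}]$. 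The key observation is that the Condition-1 step size (\ref{parameters}a), which in the noiseless case equals $\mu_t=\|\X^{(t-1)}-\X^{\ast}\|_F^2/\|\mathcal{A}(\X^{(t-1)}-\X^{\ast})\|_2^2$, is automatically forced by the same RIP into $[\gamma^2/(1+\delta_{n_1}),\gamma^2/(1-\delta_{n_1})]$. Writing $\mu_t\mathcal{A}^T\mathcal{A}=(\mu_t/\gamma^2)\,\gamma^2\mathcal{A}^T\mathcal{A}$ then places its eigenvalues in $[\frac{1-\delta_{n_1}}{1+\delta_{n_1}},\frac{1+\delta_{n_1}}{1-\delta_{n_1}}]$, so those of $\I-\mu_t\mathcal{A}^T\mathcal{A}$ lie in $[-\frac{2\delta_{n_1}}{1-\delta_{n_1}},\frac{2\delta_{n_1}}{1+\delta_{n_1}}]$ and $\|\I-\mu_t\mathcal{A}^T\mathcal{A}\|_{\mathrm{op}}\le\frac{2\delta_{n_1}}{1-\delta_{n_1}}=\frac{1+\delta_{n_1}}{1-\delta_{n_1}}-1$. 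Squaring gives the Module-A bound; note that $\gamma$ cancels, so the result is scale-free.

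Composing the two bounds yields (\ref{theorem_iter}) directly. For convergence I would require the one-step factor to stay below a fixed $\xi\in(0,1)$. Because $\frac{1}{(1-\alpha_t)^2}-1$ is increasing on $[0,1)$ it is maximized at $\alpha_{max}=\sup\{\alpha_t\}$, so it suffices that $\big(\frac{1}{(1-\alpha_{max})^2}-1\big)\big(\frac{1+\delta_{n_1}}{1-\delta_{n_1}}-1\big)^2\le\xi$. Isolating $\frac{2\delta_{n_1}}{1-\delta_{n_1}}\le\sqrt{\xi/(\frac{1}{(1-\alpha_{max})^2}-1)}$ and inverting the increasing map $\delta\mapsto\frac{2\delta}{1-\delta}$ reproduces exactly the stated threshold $\delta_{n_1}<1/\big(1+2\sqrt{\frac{1}{\xi}(\frac{1}{(1-\alpha_{max})^2}-1)}\big)$, after which $\|\X^{(t)}-\X^{\ast}\|_F^2\le\xi^{t}\|\X^{(0)}-\X^{\ast}\|_F^2\to0$.

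I expect the main obstacle to be the Module-A step, specifically justifying the operator-norm bound applied to the error $\X^{(t-1)}-\X^{\ast}$. This is where the rank-$n_1$ (rather than rank-$2r$) RIP is indispensable: the error has rank up to $2r$ and, more critically, $\mathcal{A}^T\mathcal{A}(\X^{(t-1)}-\X^{\ast})$ need not be low rank, so only a full-rank RIP lets one treat $\I-\mu_t\mathcal{A}^T\mathcal{A}$ as a uniformly bounded operator. The secondary technical point is verifying that the scaling $\gamma$ truly cancels and that the data-dependent $\mu_t$ lands in the claimed interval, which is exactly what ties the adaptively chosen step size to the RIP constant.
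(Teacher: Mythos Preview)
Your proposal is correct and follows essentially the same route as the paper: the Module-B bound via Lemma~\ref{lemma3}, Condition~2, and the best-rank-$r$ inequality~(\ref{lemma4}) matches (\ref{xrleq})--(\ref{ineq0}) exactly, and your Module-A operator-norm bound is precisely the content of the cited inequality~(\ref{ineq2}) from \cite{kyrillidis2014matrix} combined with the RIP sandwich~(\ref{48}) on $\mu_t$. The only cosmetic difference is that you track the RIP scaling $\gamma$ explicitly and show it cancels, whereas the paper silently works with $\gamma=1$; otherwise the decomposition, the key lemmas, and the final threshold derivation are identical.
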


\begin{proof}
	Since $\Z^{(t)}$ is the best rank-$r$ approximation of $\R^{(t)}$, we have $\|\R^{(t)}\|_F^2\geq\|\Z^{(t)}\|_F^2$. Then, from Lemma 3, we obtain
	\begin{align}\label{xrleq}
		\|\X^{(t)}-\R^{(t)}\|_F^2\leq\frac{\|\R^{(t)}-\Z^{(t)}\|_F^2}{(1-\alpha_t)^2}.
	\end{align}
Then, we have
\BS\label{xrerror}
\begin{align}
	\|\X^{(t)}-\R^{(t)}\|_F^2&=\|\X^{(t)}-\X^{\ast}+\X^{\ast}-\R^{(t)}\|_F^2\\
	&=\|\X^{(t)}-\X^{\ast}\|_F^2+\|\X^{\ast}-\R^{(t)}\|_F^2\notag\\
	&\ \ \ +2\left<\X^{(t)}-\X^{\ast},\X^{\ast}-\R^{(t)}\right>\\
	&=\|\X^{(t)}-\X^{\ast}\|_F^2+\|\X^{\ast}-\R^{(t)}\|_F^2
\end{align}\ES
where (\ref{xrerror}c) follows from $\left<\X^{(t)}-\X^{\ast},\X^{\ast}-\R^{(t)}\right>=0$ in Condition 2, and (\ref{xrerror}d) follows from (\ref{xrleq}). Combining (\ref{lemma4}), (\ref{xrleq}), and (\ref{xrerror}), we obtain
\BS\label{ineq0}
\begin{align}
&\|\X^{(t)}-\X^{\ast}\|_F^2\leq\left(\!\frac{1}{(1-\alpha_{t})^2}\!-\!1\!\right)\|\R^{(t)}-\X^{\ast}\|_F^2\\
=&\left(\!\frac{1}{(1-\alpha_{t})^2}\!-\!1\!\right)\|\X^{(t\!-\!1)}\!+\!\mu_t\mathcal{A}^{\ast}(\y-\mathcal{A}(\X^{(t-1)}))-\X^{\ast}\|_F^2\\
	=&\left(\!\frac{1}{(1-\alpha_{t})^2}\!-\!1\!\right)\|(\mathcal{I}\!-\!\mu_t\mathcal{A}^{\ast}\mathcal{A})(\X^{(t-1)}\!-\!\X^{\ast})\|_F^2.
\end{align}\ES
Since $\mathcal{A}$ has RIP with rank $n_1$ and RIC $\delta_{n_1}$, we obtain the following inequality from \cite{kyrillidis2014matrix}:
\begin{align}
	&\|(\mathcal{I}\!-\!\mu_t \mathcal{A}^{\ast}\!\mathcal{A})(\X^{(t\!-\!1)}\!\!-\!\!\X^{\ast})\|_F^2\notag\\
	\leq&  \max\left((\mu_t(1\!+\!\delta_{n_1})\!-\!1)^2,(\mu_t(1\!-\!\delta_{n_1})\!-\!1)^2\right) \|\X^{(t\!-\!1)}\!\!-\!\!\X^{\ast}\|_F^2.\label{ineq2}
\end{align}

Recall that $\mu_{t}=\frac{\|\X^{(t-1)}-\X^{\ast}\|_F^2}{\|\mathcal{A}(\X^{(t-1)}-\X^{\ast})\|_2^2}$ obtained by letting $\n=\bm{0}$ in (\ref{parameters}a). From RIP, we have
\begin{align}\label{48}
	\frac{1}{1+\delta_{n_1}}\leq \mu_t=\frac{\|\X^{(t-1)}-\X^{\ast}\|_F^2}{\|\mathcal{A}(\X^{(t-1)}-\X^{\ast})\|_2^2} \leq \frac{1}{1-\delta_{n_1}}.
\end{align}
Then, combining (\ref{ineq2}) and (\ref{48}), we have
\begin{align}
	\|(\mathcal{I}\!-\!\mu_t\mathcal{A}^{\ast}\mathcal{A})(\X^{(t\!-\!1)}\!-\!\X^{\ast})\|_F^2\leq \left(\frac{1\!+\!\delta_{n_1}}{1\!-\!\delta_{n_1}}\!-\!1\right)^2 \!\!\|\X^{(t\!-\!1)}\!-\!\X^{\ast}\|_F^2.\label{ineq4}
\end{align}
Combining (\ref{ineq4}) and (\ref{ineq0}), we arrive at (\ref{theorem_iter}). 

When $\delta_{n_1}$ satisfies (41), we have
\begin{align}
	\|\X^{(t)}-\X^{\ast}\|_F^2<\xi \|\X^{(t-1)}-\X^{\ast}\|_F^2
\end{align}
at each iteration $t$. Then, TARM converges exponentially to $\X^{\ast}$.
\end{proof}
We now compare the convergence rate of TARM with those of SVP and NIHT. Compared with \cite[Equ. 2.11-2.14]{tanner2013normalized}, (\ref{theorem_iter}) contains an extra term $\frac{1}{(1-\alpha_{t})^2}-1$. From numerical experiments, $\alpha_t$ is usually close to zero, implying that TARM converges much faster than SVP and NIHT.

\section{Proof of Theorem 1}
For a partial orthogonal ROIL operator $\mathcal{A}$, the following properties hold:
\BS\label{proporth}
\begin{align}
	\mathcal{A}(\mathcal{A}^T(\bm{a}))&=\bm{a}\\
	\left<\mathcal{A}^T(\bm{a}),\mathcal{A}^T(\bm{b})\right>&=\left<\bm{a},\bm{b}\right>.
\end{align}\ES
Then as $m,n\rightarrow \infty$ with $\frac{m}{n}\rightarrow \delta$, we have
\BS\label{error1}
\begin{align}
	&\left\|\R^{(t)}-\X^{\ast}\right\|_F^2\notag\\
	=&\left\|\X^{(t)}\!-\!\X^{\ast}\!-\!\frac{1}{\delta}\mathcal{A}^T\mathcal{A}(\X^{(t)}\!-\!\X^{\ast})+\mu_t\mathcal{A}^T(\n)\right\|_F^2\\
	=&\|\X^{(t)}\!-\!\!\X^{\ast}\|_F^2\!+\!\frac{1}{\delta^2}\|\mathcal{A}^T\!\mathcal{A}(\X^{(t)}\!-\!\X^{\ast})\|_F^2\notag\\
	&-\frac{2}{\delta}\|\mathcal{A}(\X^{(t)}\!-\!\X^{\ast})\|_F^2+\frac{1}{\delta^2}\|\n\|_2^2\\
	=&\|\X^{(t)}\!-\!\X^{\ast}\|_F^2\!+\!\frac{1}{\delta^2}\|\mathcal{A}(\X^{(t)}\!-\!\X^{\ast})\|_F^2\notag\\
	&-\frac{2}{\delta}\|\mathcal{A}(\X^{(t)}-\X^{\ast})\|_F^2+\frac{1}{\delta^2}\|\n\|_2^2 \\
	=&\|\X^{(t)}-\X^{\ast}\|_F^2+\frac{1}{\delta}\|\X^{(t)}-\X^{\ast}\|_F^2\notag\\
	&-2\|\X^{(t)}-\X^{\ast}\|_F^2+\frac{1}{\delta^2}\|\n\|_2^2\\
	=&\left(\frac{1}{\delta}-1\right)\|\X^{(t)}-\X^{\ast}\|_F^2+n\sigma^2
\end{align}\ES
where (\ref{error1}a) is obtained by substituting $\R^{(t)}=\X^{(t-1)}+\mu_t\mathcal{A}^{T}(\y-\mathcal{A}(\X^{(t-1)}))$ and $\y=\mathcal{A}(\X^{\ast})+\n$, (\ref{error1}b) is obtained by noting that $\n$ is independent of $\mathcal{A}(\X^{(t)}-\X^{\ast})$ (ensured by Assumption 1), (\ref{error1}c) follows from (\ref{proporth}b), and (\ref{error1}d) follows from $\frac{\|\mathcal{A}(\X^{(t)}-\X^{\ast})\|_2^2}{\|\X^{(t)}-\X^{\ast}\|_F^2}\rightarrow\delta$ (see (\ref{mu})). When $\frac{1}{n}\|\X^{(t)}-\X^{\ast}\|_F^2\rightarrow\tau$, we have $\frac{1}{n}\|\R^{(t)}-\X^{\ast}\|_F^2\rightarrow (\frac{1}{\delta}-1)\tau +\sigma^2$.

We now consider the case of Gaussian ROIL operators. As $m,n \rightarrow \infty$ with $\frac{m}{n	}\rightarrow\delta$, we have

\BS\label{error2}
\begin{align}
	&\left\|\R^{(t)}-\X^{\ast}\right\|_F^2\notag\\
	=&\|\X^{(t)}\!-\!\!\X^{\ast}\|_F^2+\frac{1}{\delta^2}\|\mathcal{A}^T\!\mathcal{A}(\X^{(t)}\!-\!\X^{\ast})\|_F^2\notag\\
	&-\!\frac{2}{\delta}\|\mathcal{A}(\X^{(t)}\!\!-\!\!\X^{\ast})\|_F^2\!+\!\frac{1}{\delta^2}\|\n\|_2^2\\
	=&\|\!\X^{(t)}\!-\!\!\X^{\ast}\|_F^2\!+\!\!\frac{1}{\delta^2}\|\!\A^T \!\A\mr{vec}(\X^{(t)}\!\!-\!\!\X^{\ast})\|_F^2\notag\\
	&-\frac{2}{\delta}\|\mathcal{A}(\X^{(t)}\!\!-\!\!\X^{\ast})\|_F^2\!\!+\!\!\frac{1}{\delta^2}\|\n\|_2^2\\
	=&\|\!\X^{(t)}\!\!-\!\!\X^{\ast}\!\|_F^2\!\!+\!\!\frac{1}{\delta^2}\frac{\|\!\A^T \!\A\|_F^2}{mn}\|\mr{vec}(\!\X^{(t)}\!\!-\!\!\X^{\ast}\!)\|_2^2\notag\\
	&-\frac{2}{\delta}\|\!\mathcal{A}(\!\X^{(t)}\!\!-\!\!\X^{\ast}\!)\!\|_F^2\!\!+\!\!\frac{1}{\delta^2}\|\n\|_2^2\\
	=&\|\!\X^{(t)}\!\!-\!\!\X^{\ast}\!\|_F^2\!\!+\!\!\frac{1}{\delta^2}\frac{\mr{Tr}{(\!(\A^T\!\A)\!^2\!)}}{mn}\|\!\X^{(t)}\!\!-\!\!\X^{\ast}\!\|_F^2\notag\\
	&-\frac{2}{\delta}\|\mathcal{A}(\!\X^{(t)}\!\!-\!\!\X^{\ast}\!)\|_F^2\!\!+\!\!\frac{1}{\delta^2}\|\n\|_2^2\\
	=&\|\X^{(t)}\!-\!\X^{\ast}\|_F^2\!+\!(1+\frac{1}{\delta})\|\X^{(t)}\!-\!\X^{\ast}\|_F^2\notag\\
	&-\frac{2}{\delta}\|\mathcal{A}(\X^{(t)}\!-\!\X^{\ast})\|_F^2\!+\!\frac{1}{\delta^2}\|\n\|_2^2\\
	=&\|\X^{(t)}-\X^{\ast}\|_F^2+(1+\frac{1}{\delta})\|\X^{(t)}-\X^{\ast}\|_F^2\notag\\
	&-2\|\X^{(t)}-\X^{\ast}\|_F^2+\frac{1}{\delta^2}\|\n\|_2^2\\
	=&\frac{1}{\delta}\|\X^{(t)}-\X^{\ast}\|_F^2+n\sigma^2
\end{align}\ES
where (\ref{error2}a) is from (\ref{error1}c), (\ref{error2}b) follows by utilizing the matrix form of $\mathcal{A}$, (\ref{error2}c) follows from the fact that $\V_A$ is a Haar distributed orthogonal matrix independent of $\X^{(t)}-\X^{\ast}$, (\ref{error2}e) is obtained by noting that $\frac{1}{mn}\mr{Tr}{((\A^T\A)^2)}\rightarrow \delta+\delta^2$ since $\A^T\A$ is a Wishart matrix with variance $\frac{1}{n}$ \cite[p.26]{tulino2004random}, and (\ref{error2}f) follows by noting $\frac{\|\mathcal{A}(\X^{(t)}-\X^{\ast})\|_2^2}{\|\X^{(t)}-\X^{\ast}\|_F^2}\rightarrow\delta$. When $\frac{1}{n}\|\X^{(t)}-\X^{\ast}\|_F^2\rightarrow\tau$, we have $\frac{1}{n}\|\R^{(t)}-\X^{\ast}\|_F^2\rightarrow \frac{1}{\delta}\tau +\sigma^2$.

\section{Proof of Lemma 2}
We first introduce two useful facts.

Fact 1: When $n_1,n_2\rightarrow \infty$ with $n_1/n_2=\rho, r/n_2=\lambda$, and the singular value of $\frac{1}{\sqrt{n_2}}\X^{\ast}$ are $[\theta_1, \theta_2,\cdots,\theta_{r}]$, the $i$-th singular value $\sigma_i$ of the Gaussian noise corrupted matrix $\R^{(t)}$ is given by \cite[equ. 9]{benaych2012singular}
\begin{align}\label{noisesingular}
	\begin{split}
		\sigma_i\xrightarrow[]{\text{a.s.}}\begin{cases}
		\sqrt{n_2\frac{(v_t+\theta_i^2)(\rho v_t+\theta_i^2)}{\theta_i^2}}& \text{if } i\leq r \text{ and } \theta_i>\rho^{1/4}\\
		\sqrt{n_2 v_t}(1+\sqrt{\rho})& \text{otherwise}
\end{cases}
	\end{split}
\end{align}
where $v_t$ is the variance of the Gaussian noise.

% obeying some mild assumptions (given in \cite{candes2013unbiased}) 
 
Fact 2: From \cite[equ. 9]{candes2013unbiased}, the divergence of a spectral function $h(\R)$ is given by
\begin{align}\label{spectraldiv}
		\text{div}(h(\R))&=|n_1-\!n_2|\!\sum_{i=1}^{\min(n_1,n_2)}\frac{h_i(\sigma_i)}{\sigma_i}+\sum_{i=1}^{\min(n_1,n_2)}h_i'(\sigma_i)\notag\\&\ \ \ +2\sum_{i\neq j,i,j=1}^{\min(n_1,n_2)}\frac{\sigma_i h_i(\sigma_i)}{\sigma_i^2-\sigma_j^2}.
\end{align}
The best rank-$r$ approximation denoiser $\mathcal{D}(\R)$ is a spectral function with
\begin{align}\label{rankrapprox}
	\begin{cases}
		h_i(\sigma_i)=\sigma_i & i\leq r;\\
		h_i(\sigma_i)=0 & i>r.
	\end{cases}
\end{align}
Combining (\ref{spectraldiv}) and (\ref{rankrapprox}), the divergence of $\mathcal{D}(\R^{(t)})$ is given by
\begin{align}\label{div1}
		\text{div}(\mathcal{D}(\R^{(t)}))&\!=\!|n_1\!-\!n_2|r\!+\!r^2\!+\!2\sum_{i=1}^{r}\!\!\sum_{j=r+1}^{\min{(n_1,n_2)}}\!\!\frac{\sigma_i^2}{\sigma_i^2\!-\!\sigma_j^2}.
\end{align}
Further, we have
\BS\label{div2}
\begin{align}
	&\ \ \ \ \ \sum_{i=1}^{r}\sum_{j=r+1}^{\min{(n_1,n_2)}}\frac{\sigma_i^2}{\sigma_i^2-\sigma_j^2}\notag\\
	&\stackrel{a.s.}{\rightarrow}(\min(n_1,n_2)-r)\sum_{i=1}^{r}\frac{\sigma_i^2}{\sigma_i^2-(\sqrt{n_2 v_t}(1+\sqrt{\rho}))^2}\\
	&=(\min(n_1,n_2)-r)\sum_{i=1}^{r}\frac{n_2\frac{(v_t+\theta_i^2)(\rho v_t+\theta_i^2)}{\theta_i^2}}{\frac{n_2(v_t+\theta_i^2)(\rho v_t+\theta_i^2)}{\theta_i^2}-n_2 v_t(1+\sqrt{\rho})^2}\\
	&=(\min(n_1,n_2)-r)\sum_{i=1}^{r}\frac{(v_t+\theta_i^2)(\rho v_t+\theta_i^2)}{(\sqrt{\rho}v_t-\theta_i^2)^2}\\
	&\stackrel{a.s.}{\rightarrow}(\min(n_1,n_2)-r)r\int_0^{\infty}\frac{(v_t+\theta^2)(\rho v_t+\theta^2)}{(\sqrt{\rho}v_t-\theta^2)^2} p(\theta)d\theta\\
	&=(\min(n_1,n_2)-r)r\Delta_1(v_t)
\end{align}\ES
where both (\ref{div2}a) and (\ref{div2}b) are from (\ref{noisesingular}), and (\ref{div2}e) follows by the definition of $\Delta_1(v_t)$. Combining (\ref{div1}) and (\ref{div2}), we obtain the asymptotic divergence of $\mathcal{D}(\R)$ given by
\BS
\begin{align}
	\text{div}(\mathcal{D}(\R))\stackrel{a.s.}{\rightarrow}&|n_1-\!n_2|r+r^2+2(\min(n_1,n_2)-r)r\Delta_1(v_t)
\end{align}\ES
and $\alpha_t=\frac{1}{n}\text{div}(f(\R^{(t)}))\stackrel{a.s.}{\rightarrow}\left|1-\frac{1}{\rho}\right|\lambda+\frac{1}{\rho}\lambda^2+2\left(\min\left(1,\frac{1}{\rho}\right)-\frac{\lambda}{\rho}\right)\lambda\Delta_1(v_t)=\alpha(v_t)$.

Recall that $\Z^{(t)}$ is the best rank-$r$ approximation of $\R^{(t)}$ satisfying
\BS\label{distance1}
\begin{align}
		\|\Z^{(t)}\|_F^2&=\sum_{i=1}^r \sigma_i^2\\
		\|\R^{(t)}\|_F^2-\|\Z^{(t)}\|_F^2&=\sum_{i=r+1}^{n_1} \sigma_i^2.
\end{align}\ES
Then, when $m,n\rightarrow\infty$ with $\frac{m}{n}\rightarrow \delta$, we have
\BS\label{distance3}
\begin{align}
	\|\Z^{(t)}\|_F^2&=\sum_{i=1}^r \sigma_i^2\\
	&\xrightarrow[]{\text{a.s.}}n_2\sum_{i=1}^r\frac{(v+\theta_i^2)(\rho v+\theta_i^2)}{\theta_i^2} \\
	&=n+\lambda\left(1+\frac{1}{\rho}\right)nv+\lambda n v^2\frac{1}{r}\sum_{i=1}^r \frac{1}{\theta_i^2}
\end{align}\ES
and
\BS\label{distance2}
\begin{align}
		&\ \ \ \  \|\R^{(t)}\|_F^2-\|\Z^{(t)}\|_F^2\notag\\
		&=\|\X^{\ast}\|_F^2+n v_t-\|\Z^{(t)}\|_F^2\\
		&\xrightarrow[]{\text{a.s.}}nv_t-\lambda(1+\frac{1}{\rho})n v_t-\lambda n v_t^2\frac{1}{r}\sum_{i=1}^r \frac{1}{\theta_i^2}
\end{align}\ES
where (\ref{distance3}b) is from (\ref{noisesingular}), (\ref{distance2}a) is from Assumption 2, and (\ref{distance2}b) is from (\ref{distance3}).
Then,
\BS\label{c_proof}
\begin{align}
	c_t=&\frac{\left<\Z^{(t)}-\alpha_{t}\R^{(t)},\R^{(t)}\right>}{\|\Z^{(t)}-\alpha_{t}\R^{(t)}\|_F^2}\\
	=&\frac{\left<\Z^{(t)},\R^{(t)}\right>-\alpha_{t}\|\R^{(t)}\|_F^2}{\|\Z^{(t)}\|_F^2-2\alpha_{t}\left<\Z^{(t)},\R^{(t)}\right>+\alpha_{t}^2\|\R^{(t)}\|_F^2}\\
	\xrightarrow[]{\text{a.s.}}&\frac{\|\Z^{(t)}\|_F^2-\alpha_t (n+v_t n)}{\|\Z^{(t)}\|_F^2-2\alpha_t\|\Z^{(t)}\|_F^2+\alpha_t^2(n+v_t n)}\\
	=&\frac{n+\lambda(1+\frac{1}{\rho})nv_t+\lambda n v_t^2\frac{1}{r}\sum_{i=1}^r \frac{1}{\theta_i^2}-\alpha_t (n+v_t n)}{(1\!-\!2\alpha_t)(n\!+\!\lambda(1\!+\!\frac{1}{\rho})nv_t\!+\!\lambda n v_t^2\frac{1}{r}\sum_{i=1}^r \frac{1}{\theta_i^2})\!+\!\alpha_t^2(n\!+\!v_t n)}\\
	=&\frac{1+\lambda(1+\frac{1}{\rho})v_t+\lambda  v_t^2\frac{1}{r}\sum_{i=1}^r \frac{1}{\theta_i^2}-\alpha_t (1+v_t)}{(1\!-\!2\alpha_t)(1\!+\!\lambda(1+\frac{1}{\rho})v_t\!+\!\lambda v_t^2\frac{1}{r}\sum_{i=1}^r \frac{1}{\theta_i^2})\!+\!\alpha_t^2(1\!+\!v_t)}\\
	\xrightarrow[]{\text{a.s.}}&\frac{1+\lambda(1+\frac{1}{\rho})v_t+\lambda  v_t^2\Delta_2-\alpha(v_t)(1+v_t)}{(1\!-\!2\alpha(v_t))(1\!+\!\lambda(1\!+\!\frac{1}{\rho})v_t\!+\!\lambda v_t^2\Delta_2)\!+\!(\alpha(v_t))^2(1+v_t)}\\
	=&c(v_t)
\end{align}\ES
where (\ref{c_proof}a) is from (\ref{parameters}c), (\ref{c_proof}c) follows from Assumption 2 that $\R^{(t)}=\X^{\ast}+\sqrt{v_t}\W$ with $\|\X^{\ast}\|_F^2=n$, and the elements of $\W$ independently drawn from $\mathcal{N}(0,1)$, (\ref{c_proof}d) is from (\ref{distance1}), and (\ref{c_proof}f) is from the definition of $\Delta_2$.

\section{Proof of Lemma 3}
 Recall the following orthogonality relations:
 \BS\label{orthogonalproper}
 \begin{align}
 	\left<\R^{(t)}-\Z^{(t)},\Z^{(t)}\right>&=0\\
 	\left<\R^{(t)}-\X^{(t)},\X^{(t)}\right>&=0
 \end{align}\ES
 where (\ref{orthogonalproper}a) follows from $\Z^{(t)}=\mathcal{D}(\R^{(t)})$ and $\mathcal{D}(\cdot)$ is the best rank-$r$ approximation denoiser, and (\ref{orthogonalproper}b) follows from (\ref{orthogonal2}).
 
With the above properties, we have
 \BS\label{lemma3proof}
\begin{align}
	&\ \ \ \ \|\X^{(t)}-\R^{(t)}\|_F^2\notag\\
	&=\|\R^{(t)}\|_F^2-\|\X^{(t)}\|_F^2\\
	&=\|\R^{(t)}\|_F^2-\left\|c_t(\Z^{(t)}-\alpha_{t}\R^{(t)})\right\|_F^2\\
	&=\|\R^{(t)}\|_F^2-\frac{\left<\Z^{(t)}-\alpha_{t}\R^{(t)},\R^{(t)}\right>^2}{\|\Z^{(t)}-\alpha_{t}\R^{(t)}\|_F^2}\\
	&=\frac{\|\R^{(t)}\|_F^2\|\Z^{(t)}\!-\!\alpha_{t}\R^{(t)}\|_F^2\!-\!\left<\!\Z^{(t)}\!-\!\alpha_{t}\R^{(t)},\R^{(t)}\!\right>^2}{\|\Z^{(t)}-\alpha_{t}\R^{(t)}\|_F^2}\\
	&=\frac{\|\R^{(t)}\|_F^2\|\Z^{(t)}\|_F^2-\left<\Z^{(t)},\R^{(t)}\right>^2}{\|\Z^{(t)}-\alpha_{t}\R^{(t)}\|_F^2}\\
	&=\frac{\|\R^{(t)}\|_F^2\|\Z^{(t)}\|_F^2-\|\Z^{(t)}\|_F^4}{\|\Z^{(t)}-\alpha_{t}\R^{(t)}\|_F^2}\\
	&=\frac{\|\R^{(t)}\|_F^2-\|\Z^{(t)}\|_F^2}{\frac{\|\R\|_F^2-\|\Z^{(t)}\|_F^2}{\|\Z^{(t)}\|_F^2}\alpha_t^2+(1-\alpha_t)^2}\\
	&=\frac{\|\R^{(t)}-\Z^{(t)}\|_F^2}{\frac{\|\R\|_F^2-\|\Z^{(t)}\|_F^2}{\|\Z^{(t)}\|_F^2}\alpha_t^2+(1-\alpha_t)^2}
\end{align}\ES
where (\ref{lemma3proof}a) follows from (\ref{orthogonalproper}b), (\ref{lemma3proof}b) follows by substituting $\X^{(t)}$ in Line 5 of Algorithm \ref{algorithm1}, (\ref{lemma3proof}c) follows by substituting $c_t$ in (\ref{parameters}c), and (\ref{lemma3proof}f-\ref{lemma3proof}h) follow from (\ref{orthogonalproper}a). This concludes the proof of Lemma \ref{lemma3}.

\section{Proof of Theorem 2}
From Condition 2 in (\ref{cond2}) and Assumption 2, we have\footnote{In fact, as $n_1,n_2,r\rightarrow\infty$ with $\frac{n_1}{n_2}\rightarrow\rho$ and $\frac{r}{n_2}\rightarrow \lambda$, the approximation in (\ref{alpha_t}) become accurate, i.e. $\alpha_t=\frac{1}{n}\text{div}(\mathcal{D}(\R^{(t)}))$ asymptotically satisfies Condition 2. Thus, (\ref{orthoproperty2}b) asymptotically holds.}
\BS\label{orthoproperty2}
\begin{align}
	\left<\R^{(t)}-\X^{\ast},\X^{\ast}\right>&=0\\
	\left<\R^{(t)}-\X^{\ast},\X^{(t)}-\X^{\ast}\right>&=0.
\end{align}\ES

Then,
\BS\label{errorxxstar}
\begin{align}
	&\|\X^{(t)}-\X^{\ast}\|_F^2\notag\\
	=&\|\X^{(t)}-\R^{(t)}\|_F^2-2\left<\X^{(t)}-\R^{(t)},\R^{(t)}-\X^{\ast}\right>\notag\\
	&+\|\R^{(t)}-\X^{\ast}\|_F^2\\
	=&\|\R^{(t)}-\X^{(t)}\|_F^2-\|\R^{(t)}-\X^{\ast}\|_F^2\\
	=&\frac{\|\R^{(t)}\|_F^2-\|\Z^{(t)}\|_F^2}{\frac{\|\R\|_F^2-\|\Z^{(t)}\|_F^2}{\|\Z^{(t)}\|_F^2}\alpha_t^2+(1-\alpha_t)^2}-\|\R^{(t)}-\X^{\ast}\|_F^2\\
	\xrightarrow[]{\text{a.s.}}&\frac{n v_t-\lambda(1+\frac{1}{\rho})n v_t-\lambda n v_t^2\Delta_2}{\frac{v_t-\lambda(1+\frac{1}{\rho})v_t-\lambda v_t^2\Delta_2}{1+\lambda(1+\frac{1}{\rho})v_t+\lambda v_t^2\Delta_2}(\alpha(v_t))^2\!+\!(1\!-\!\alpha(v_t))^2}\!-\!nv_t
\end{align}\ES
where (\ref{errorxxstar}b) is from (\ref{orthoproperty2}b), (\ref{errorxxstar}c) follows from (\ref{lemma3proof}), and (\ref{errorxxstar}d) follows from (\ref{distance3}) and (\ref{distance2}) and Assumption 2. Therefore, (\ref{mseb}) holds, which concludes the proof of Theorem 2.

%\section{Proof of Theorem 3}
%From (\ref{errorxxstar}), we have
%\BS\label{prooftheroem3}
%\begin{align}
%	\|\X^{(t)}-\X^{\ast}\|_F^2&=\frac{\|\R^{(t)}\|_F^2-\|\Z^{(t)}\|_F^2}{\frac{\|\R\|_F^2-\|\Z^{(t)}\|_F^2}{\|\Z^{(t)}\|_F^2}\alpha_t^2+(1-\alpha_t)^2}-\|\R^{(t)}-\X^{\ast}\|_F^2\\
%	&\leq \frac{\|\R^{(t)}\|_F^2-\|\Z^{(t)}\|_F^2}{(1-\alpha_t)^2}-\|\R^{(t)}-\X^{\ast}\|_F^2\\
%	&\xrightarrow[]{\text{a.s.}}\frac{n v-\lambda(1+\frac{1}{\rho})n v_t-\lambda n v_t^2\Delta}{(1-\alpha_t)^2}-nv_t\\
%	&\leq \frac{n v_t-\lambda(1+\frac{1}{\rho})n v_t}{(1-\alpha_t)^2}-nv_t
%\end{align}\ES
%where (\ref{prooftheroem3}b) follows by noting $\|\R^{(t)}\|_F^2\geq\|\Z^{(t)}\|_F^2$ from (\ref{distance1}b), (\ref{prooftheroem3}c) follows from (\ref{distance2}) and Assumption 2, and (\ref{prooftheroem3}d) follows by omitting the item $-\lambda n v^2\Delta<0$. This concludes the proof of Theorem 3.

\bibliographystyle{IEEEtran}
%\bibliography{IEEEabrv,ref}

\end{document}